\newtheorem{lemma}{Lemma}      
\newtheorem{remark}{Remark}   
\begin{document}

\title{Aerial Semantic Relay-Enabled SAGIN: Joint UAV Deployment and Resource Allocation}

\author
    {Yanbo Yin, Dingzhu Wen, Changsheng You, XiaoWen Cao, Tat-Ming Lok, and Dusit Niyato
    \thanks{Y. Yin and D. Wen are with the School of Information Science and Technology, ShanghaiTech University, Shanghai 201210, China (e-mail: \{yinyb2023, wendzh\}@shanghaitech.edu.cn). D. Wen is the corresponding author. 
    }
    \thanks{C. You is with the Department of Electronic and Electrical Engineering, Southern University of Science and Technology, Shenzhen 518055, China (e-mail: youcs@sustech.edu.cn).}
    \thanks{X. Cao is with the College of Electronic and Information Engineering, Shenzhen University and Guangdong Provincial Key Laboratory of Future Networks of Intelligence, Shenzhen 518172, China (email: caoxwen@szu.edu.cn) }
    \thanks{T. -M. Lok is with the Department of Information Engineering, The Chinese University of Hong Kong, Hong Kong (e-mail: tmlok@ie.cuhk.edu.hk)}
    \thanks{D. Niyato is with College of Computing and Data Science, Nanyang Technological University, Singapore (e-mail: dniyato@ntu.edu.sg).}

}



\maketitle

\begin{abstract}
Space-Air-Ground Integrated Networks (SAGINs) are pivotal for enabling ubiquitous connectivity in 6G systems, yet they face significant challenges due to severe satellite-to-ground link impairments. Although Unmanned Aerial Vehicles (UAVs) can function as relay nodes to compensate for air-to-ground channel degradation, the satellite-to-UAV link remains a critical bottleneck. Semantic Communication (SemCom) emerges as a promising solution to enhance spectral efficiency by transmitting essential semantic information. This paper proposes a novel multi-cluster UAV-aided SAGIN SemCom architecture that supports both semantic users (SemUsers) and conventional users (ConUsers). While SemCom is employed in the satellite-to-UAV link to improve transmission efficiency, the UAVs implement an intelligent adaptive relay strategy, capable of either directly forwarding semantic data to SemUsers or converting it into bit-level data for ConUsers. Compared to existing similar schemes, this design guarantees the high-efficiency advantages of SemCom while enabling network access for larger coverage area. A joint optimization problem is formulated to maximize the system's sum-rate through coordinated allocation of power, bandwidth, and UAV positions. To address this non-convex problem, we develop an efficient alternating optimization (AO) algorithm, which decomposes the original problem into tractable subproblems. Numerical results demonstrate that the proposed algorithm significantly outperforms baseline schemes in terms of both sum-rate and spectral efficiency across various channel conditions and user distributions, underscoring the importance of joint resource allocation and intelligent UAV deployment.

\end{abstract}

\begin{IEEEkeywords}
SAGIN, Semantic Communication, Resource Allocation
\end{IEEEkeywords}

\IEEEpeerreviewmaketitle

\section{Introduction}
Space-Air-Ground Integrated Networks (SAGINs) have emerged as a critical architecture for enabling ubiquitous connectivity in 6G systems, particularly in scenarios with unreliable terrestrial infrastructure \cite{SAGINmag,saginsurvey}. However, direct satellite-to-ground links often experience significant degradation due to substantial path loss over long distances and atmospheric attenuation, resulting in heavy communication overhead \cite{satchannelestimation,satchannelestimation2}. Semantic Communication (SemCom) addresses this issue by transmitting only essential semantic information rather than raw bits, but it imposes high computational demands on users \cite{yinsem}. By deploying unmanned aerial vehicles (UAVs) as relays, path loss can be mitigated \cite{urbanuav1}, and more importantly, UAVs can perform semantic computation tasks on behalf of users lacking computing capabilities, thereby reducing the hardware requirements \cite{yousem}. This paper proposes an innovative SAGIN downlink architecture where UAVs serve as aerial semantic relays, with a joint resource allocation and deployment strategy to optimize system performance. An efficient optimization algorithm is developed to solve the corresponding problem.

\subsection{Related Work}
SAGIN integrates space, air, and ground networks, envisioned to provide access to more users on Earth \cite{zhouSAGIN}, which also aligns with the vision for 6G global connectivity \cite{wen6g}. For instance, by employing aerial vehicles as relays, users in remote areas can be integrated into the network with lower device requirements \cite{saginOnSea} compared to conventional direct satellite links \cite{ship2sat}. Moreover, in urban areas, aerial relays can enhance satellite-to-ground links, which are affected by obstructions from tall buildings and dynamic atmospheric attenuations \cite{urbanuav1}. Among various aerial platforms, UAVs are considered an ideal choice for aerial relays \cite{UAVsagin}. Their long-term operational capability makes them suitable for providing prolonged services. Furthermore, compared to conventional location-fixed relays \cite{zhouris}, their flexible mobility allows adjustment of deployment strategies to optimize service efficiency \cite{urbanuav1}. Numerous studies have been conducted to investigate various problems, such as multi-device association \cite{association}, deployment strategy \cite{ships}, and channel performance analysis \cite{ship2sat}, using various methods including reinforcement learning \cite{reinforcement} and convex optimization \cite{urbanuav1}. However, existing studies predominantly focus on single-cluster scenarios within conventional Shannon-capacity frameworks, lacking consideration of multi-cluster scenarios and more efficient communication approaches. With the number of accessing users increasing, adopting advanced communication paradigms becomes crucial.

In recent years, SemCom has emerged as a promising paradigm for enhancing transmission efficiency \cite{gunduzsemcom}, 
\cite{csgsemcom}. While early research established device-to-device schemes for various data types, including text \cite{qintext}, images \cite{DuSemComImage}, video \cite{videoSemCom}, and multi-modal \cite{yinsem}, recent efforts have focused on extending its coverage and applicability through relay-aided SemCom. In such systems, relays not only expand the service area \cite{relaysemcom4,relaysemcom5} but can also decode semantic information on behalf of resource-constrained users \cite{relaysemcom1,yousem}. To enhance system performance, numerous resource management schemes have been proposed. Corresponding studies have addressed this challenge through methods such as shared probability graphs for latency minimization \cite{youiccresource}, and adaptive quantization paradigms optimized via deep reinforcement learning to balance transmission quality and resource consumption \cite{qinresource}.

Owing to its potential for enhancing efficiency, integrating SemCom into SAGINs represents a promising direction to surpass Shannon capacity. Recent work has begun to explore this synergy, particularly through semantic relays. For instance, \cite{satchannel} investigates a UAV-based semantic relay system, analyzing the trade-off between communication and computation. Meanwhile, advanced AI techniques have been proposed to equip SAGIN nodes with semantic intelligence for functions such as resource allocation \cite{combined1}, routing \cite{combined2}, and task-driven networking \cite{combined3}. However, existing studies remain largely limited to point-to-point schemes within isolated user clusters. In practice, the large coverage of a satellite necessitates serving multiple clusters simultaneously, which requires coordinated resource orchestration across all network layers. Furthermore, most works assume ground users possess sufficient computational capability for semantic decoding, thereby overlooking users with limited resources.

\subsection{Challenges and Contributions}

As noted above, prior work has not adequately characterized multi-cluster, multi-layer SAGINs serving heterogeneous users. To address this gap, this paper formulates a joint resource-allocation problem for a SAGIN composed of a satellite base station, multiple UAV relays, and mixed ground users. The challenge arises from allocating power and bandwidth across competing demands of multiple clusters and network layers. In addition, the deployment of UAVs must be optimized to balance users' channel gains with the aim of maximizing the system sum-rate. These decision variables are tightly coupled and induce nonconvex constraints, rendering the joint optimization computationally challenging. To address these challenges, a resource-allocation and UAV-deployment scheme for SAGIN is proposed. The detailed contributions are summarized as follows:

\begin{enumerate}
    \item \textbf{UAV-Aided Multi-Cluster SemCom Framework:} We propose a novel SemCom framework for SAGIN. It incorporates a UAV-aided multi-cluster architecture and the scenario of hybrid ground users, which include both semantic-capable users and conventional bit-level users, with dedicated transmission schemes designed for each type. This framework maintains the high efficiency of SemCom while extending its service coverage to ground users with limited computational capabilities.
    
    \item \textbf{Joint Resource and UAV Deployment Optimization:} We formulate a joint optimization problem to maximize the system sum-rate through coordinated allocation of transmit power, bandwidth, and UAV positions, subject to constraints that capture resource competition across network layers, clusters, and users. We decompose the original non-convex problem into tractable convex subproblems by introducing auxiliary variables, enabling the development of an efficient algorithm through alternating optimization. The results demonstrate that power and bandwidth allocation exhibit complementary characteristics, where increased allocation of one resource necessitates corresponding increases in the other. Furthermore, we obtain an optimal UAV position as a weighted centroid, with weights dependent on user distances and channel states.
    
    \item \textbf{Performance Evaluation:} We evaluate the proposed scheme through numerical experiments. Results demonstrate that joint optimization of multiple resources and UAV positions achieves superior sum-rate performance compared to schemes lacking full joint optimization. Furthermore, with a fixed number of users, system performance improves as the number of UAV relays increases, since additional UAVs introduce more communication resources into the system. The results validate the superiority of the proposed approach over various baseline schemes, demonstrating its significant potential in enhancing the efficiency of SAGIN systems.
\end{enumerate}

In the remainder of the paper, Section~II introduces the system model and problem formulation, and the optimization problem is formulated in Section~III. The proposed solution algorithm is presented in Section~IV. Numerical results are provided in Section~V, and conclusions are drawn in Section~VI.

\section{System Model}

\subsection{Network Model and Transmission Scheme}

\begin{figure}
    \centering
    \includegraphics[width=0.7\linewidth]{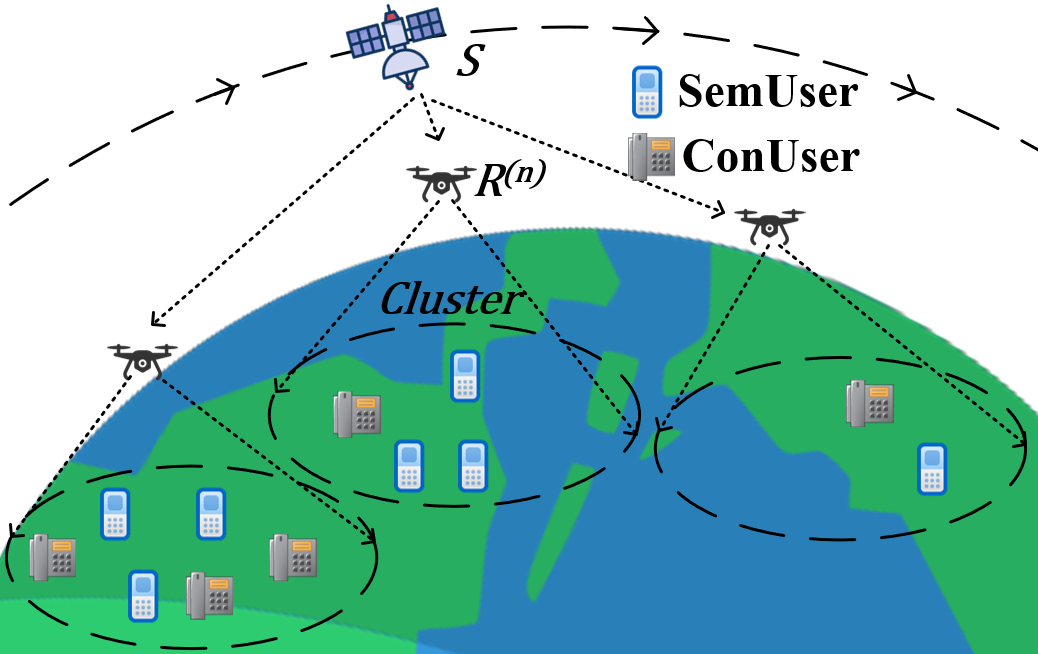}
    \caption{Illustration of the downlink semantic relay system in the considered SAGIN architecture. A satellite serves as the base station in space, while $N$ UAVs function as aerial relays, each supporting a cluster of ground users.}
    \label{fig:system model}
\end{figure}

Consider a downlink SAGIN as illustrated in Fig.~\ref{fig:system model}, comprising a single satellite denoted by $S$ and equipped with a multi-antenna base station (BS), and $N$ single-antenna UAVs serving as aerial relays. Each UAV relay, indexed by $\{R^{(n)}\}_{n=1}^N$, is responsible for forwarding received data from the satellite to a corresponding cluster of single-antenna ground users (GUs). Both the satellite and UAVs are assumed to possess sufficient storage and computational resources to perform both semantic and conventional communication tasks \cite{kbhsagin}. Each ground user cluster, associated with relay $R^{(n)}$, comprises two categories of users:
\begin{itemize}
    \item \textbf{Semantic Users (SemUsers)}: The SemUsers are denoted by $\{U^{(n)}_{\text{sem}, i}\}, \, i \in \mathcal{I}^{(n)} \triangleq \{1, \dots, I^{(n)}\}$, where $I^{(n)}$ is the number of semantic users within the $n$-th cluster. These users are equipped with deep SemCom (DeepSC) decoders and have the computing capabilities to operate in the SemCom mode \cite{yousem}.
    
    \item \textbf{Conventional Users (ConUsers)}: The ConUsers are denoted by $\{U^{(n)}_{\text{con}, j}\}, \, j \in \mathcal{J}^{(n)} \triangleq \{1, \dots, J^{(n)}\}$, where $J^{(n)}$ is the number of conventional users in the $n$-th cluster. These users possess limited processing power and can only operate under traditional bit-level digital communication paradigms \cite{yousem}. 
\end{itemize}

Due to the significant path loss and weak link gains caused by the long-distance satellite-to-ground links, direct communication between the satellite and ground users is considered infeasible. Therefore, UAV-based relaying is employed. The overall system operation, highlighting the semantic relay process, is depicted in Fig.~\ref{fig: workflow}, and is described as below.

\subsubsection{Satellite-to-UAV Hop}
To mitigate the path loss over the satellite-to-UAV links, SemCom is employed in this first-hop transmission. At the satellite, the original data is processed using an AI-driven DeepSC encoder, which extracts high-level semantic features and maps them onto a continuous constellation space\footnote{For practical digital implementation, these continuous symbols can be quantized into a discrete constellation (e.g., high-order QAM).}. These semantic symbols are then transmitted in analog form to each UAV relay $R^{(n)}$. The transmissions from the satellite to the $N$ UAVs are performed over orthogonal frequency bands, thereby eliminating inter-relay interference.

\subsubsection{UAV-to-GU Hop}

Upon reception of the semantic symbols, each UAV relay $R^{(n)}$ operates in a dual-mode transmission strategy tailored to the heterogeneous capabilities of SemUsers and ConUsers in its associated ground cluster:

\begin{itemize}
    \item \textbf{UAV-to-SemUser Transmission} \cite{yousem}: Each relay $R^{(n)}$ forwards the received semantic symbols to its associated SemUsers. Since SemUsers are capable of semantic decoding, the UAV re-encodes the semantic symbols using a conventional digital source and channel encoder into bit streams, which are then transmitted using standard digital modulation schemes. Upon reception, each SemUser performs conventional channel and source decoding to recover the semantic symbols, which are subsequently passed through the locally deployed DeepSC decoder to reconstruct the original message.
    
    \item \textbf{UAV-to-ConUser Transmission} \cite{yousem}: Due to their limited computational capabilities, ConUsers cannot process semantic symbols locally. Hence, the UAV relay decodes the received semantic symbols with onboard DeepSC decoder to reconstruct the original message, which is then re-encoded using conventional source and channel coding schemes and transmitted as bit streams to the ConUsers. Each ConUser applies traditional digital demodulation, channel decoding, and source decoding to recover the original message from the satellite.
\end{itemize}

It is important to note that although SemCom offers significantly higher transmission efficiency, it imposes higher computational demands on both the transmitter and receiver. This trade-off necessitates the strategic deployment of semantic relaying, depending on the computational resources and communication capabilities of the network entities. 
In the considered system, all GUs are concurrently accessed into the network via frequency division multiple access (FDMA), leading to non-overlapping bandwidth assignments for UAVs in the space-to-air layer to mitigate potential interference. Besides, owing to the large between-cluster separation, UAVs do not interfere with users in other clusters, allowing bandwidth reuse across different clusters \cite{relaySemcomResource1}.

\begin{figure*}
    \centering
    \includegraphics[width=0.8\linewidth]{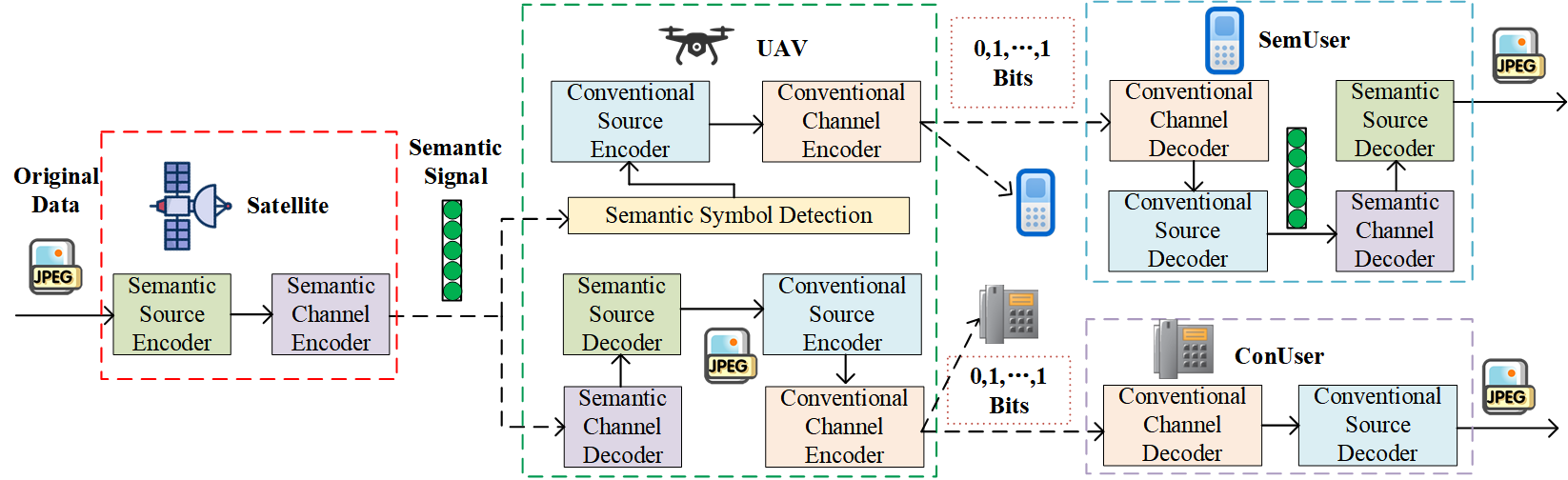}
    \caption{Illustration of dual-mode transmission schemes for semantic-capable and conventional users. The satellite encodes source data into semantic representations. Upon reception, the UAV implements two distinct processing strategies: for semantic users, it forwards semantic symbols directly; for conventional users, it performs semantic decoding and transmits reconstructed bit-level data.}
    \label{fig: workflow}
\end{figure*}

\subsection{Satellite-to-Relay Hop}
The satellite $S$ is equipped with $K$ antennas and a UAV relay $R^{(n)}$ is equipped with a single antenna. The channel vector between $S$ and $R^{(n)}$ is denoted by $\mathbf{h}^{(n)}_{S2R} \in \mathbb{C}^{K \times 1}$ and is influenced by various factors including space-propagation fading, atmospheric absorption, and rain attenuation. Following \cite{satchannel}, the norm of the channel vector is given by $\|\mathbf{h}_{S2R}^{(n)}\|^2 = \left( \frac{\tilde \sigma^{(n)}}{4\pi d^{(n)}_{S2R}} \right)^2$, where $\tilde \sigma^{(n)}$ is the carrier wavelength and $d^{(n)}_{S2R}$ is the Euclidean distance. Following a common approach in multi-input single-output (MISO) channel analysis (see e.g., \cite{scalar_equiv1} and \cite{scalar_equiv2}), the satellite-to-relay channel can be equivalently represented as a scalar channel when appropriate transmit beamforming is applied. This equivalence is justified for system-level performance analysis and resource allocation optimization, as it captures the essential channel characteristics while significantly simplifying the mathematical formulation. It is assumed that the satellite employs maximum ratio transmission (MRT) beamforming towards the UAV relay. The beamforming vector $\mathbf{w}^{(n)} \in \mathbb{C}^{K \times 1}$ is defined with power constraint $\|\mathbf{w}^{(n)}\|^2 = \varsigma^{(n)}$, where $\varsigma^{(n)}$ denotes the satellite beam gain. The effective channel gain is then expressed as:
\begin{equation}
\begin{aligned}
\left| (\mathbf{h}^{(n)}_{S2R})^H \mathbf{w}^{(n)} \right|^2 
&= (\mathbf{h}^{(n)}_{S2R})^H \mathbf{w}^{(n)} (\mathbf{w}^{(n)})^H \mathbf{h}^{(n)}_{S2R} \\
&= \varsigma^{(n)} \left( \frac{\tilde \sigma^{(n)}}{4\pi d^{(n)}_{S2R}} \right)^2,
\end{aligned}
\end{equation}
which corresponds to the scalar channel coefficient in the model. For clarity in subsequent discussions, the channel coefficient is modeled as an equivalent scalar, i.e.,
\begin{equation}
    h^{(n)}_{S2R} = \frac{\sqrt{\varsigma^{(n)}} \tilde \sigma^{(n)}}{4\pi d^{(n)}_{S2R}}.
\end{equation}

The impact of UAV mobility in the satellite-to-UAV link is assumed negligible since the UAV's feasible displacement is several orders of magnitude smaller than $d^{(n)}_{S2R}$ \cite{kbhsagin}. Consequently, the received signal-to-noise ratio (SNR) in dB at $R^{(n)}$ denoted by $r^{(n)}_{S2R}$ is given by
\begin{equation}
r^{(n)}_{S2R}=10 \log_{10} \frac{P^{(n)}_{S2R}\,|h^{(n)}_{S2R}|^2}{B^{(n)}_{S2R}N_0},
\end{equation}
where $B^{(n)}_{S2R}$ is the bandwidth allocated to the link, $P^{(n)}_{S2R}$ is the transmit power assigned by the satellite to this link, and $N_0$ denotes the noise power spectral density.

Following the analysis in \cite{qintext}, a semantic block is defined as the smallest unit conveying one part of complete semantic information, such as one image \cite{imgsimilarity} or one sentence \cite{qintext}. It is assumed that the average number of semantic symbols to represent one semantic block is $Q$, and the average semantic content per data unit is $M$ which is measured in semantic units (suts). Thus, the effective semantic rate in this link, defined by the successfully transmitted semantic information per second (suts/s), can be modeled as 
\begin{equation}
    \Psi^{(n)}_{S2R} = \frac{B^{(n)}_{S2R} M}{Q}\,\varepsilon_Q\big(r^{(n)}_{S2R}\big),
\end{equation}
where $B^{(n)}_{S2R}$ equals the value of semantic symbol rate \cite{yousem} and $\varepsilon_Q(r^{(n)}_{S2R})\in[0,1]$ denotes the semantic similarity between the original and recovered data unit. The function $\varepsilon_Q(\cdot)$ can be approximated by a generalized logistic function parameterized by constants, i.e.,
\begin{equation}
    \varepsilon_Q(r^{(n)}_{S2R}) = a_1 + \frac{a_2}{1 + \exp(-c_1 r^{(n)}_{S2R} - c_2)},
\end{equation}
which can be obtained numerically for a given semantic task and encoder/decoder pair \cite{yousem}. Let $\mu_1$ denote the average number of bits required to represent one data unit in conventional transmission. Since the number of data units conveyed by the transmitted semantic information per second is $\Psi^{(n)}_{S2R}/M$, the effective semantic-to-bit rate for this link can be written as
\begin{equation}
    \Gamma^{(n)}_{S2R} = \mu_1 \frac{\Psi^{(n)}_{S2R}}{M} = \mu_1 \frac{B^{(n)}_{S2R}}{Q}\,\varepsilon_Q\big(r^{(n)}_{S2R}\big), \label{gammas2r}
\end{equation}
which represents the equivalent conventional bit rate (bits/s).

\subsection{Relay-to-Ground Hop}

In practice, the $n$-th UAV is assumed to hover at a fixed minimum height $H_R$ \cite{relaySemcomResource1} to avoid crashing on the terrain, whose position is represented as $\boldsymbol{\ell}^{(n)}_{R}=(x^{(n)}_R, y^{(n)}_R, H_R)$, and the ground users are located on a two-dimensional plane with an elevation of zero. As mentioned above, the transmission strategy from UAV relay to a ground user is tailored to the heterogeneous capabilities of the user, as discussed below.
\begin{itemize}
    \item \textbf{UAV-to-SemUser}: 
For an arbitrary SemUser $U^{(n)}_{sem,i}$ at $\boldsymbol{\ell}^{(n)}_{SU,i}=(x^{(n)}_{SU,i}, y^{(n)}_{SU,i}, 0)$, the Euclidean distance between $R^{(n)}$ and $U_{sem, i}^{(n)}$ is given by
\begin{equation}
    d^{(n)}_{R2SU, i} = \big\lVert\boldsymbol{\ell}^{(n)}_R-\boldsymbol{\ell}^{(n)}_{SU,i}\big\rVert.
\end{equation}
As the channels from UAV to GUs are more likely to be dominated by the LoS link, the channel gain is described by the free-space path loss model \cite{uavchannelnew1}, \cite{uavchannelnew2}, i.e., 
\begin{equation}
\label{channel}
\begin{aligned}
    H^{(n)}_{R2SU, i} =  \beta_0\big(d^{(n)}_{R2SU, i}\big)^{-\alpha} = \beta_0 \big\lVert\boldsymbol{\ell}^{(n)}_R-\boldsymbol{\ell}^{(n)}_{SU,i}\big\rVert^{-\alpha},
\end{aligned}
\end{equation}
where $\beta_0$ denotes the channel power gain at a reference distance $d_0=1\,$m, $\alpha$ is the path-loss exponent. Therefore, the SNR (linear scale) in this link is
\begin{equation}
    \gamma^{(n)}_{R2SU, i} = \frac{H^{(n)}_{R2SU, i} P^{(n)}_{R2SU, i}}{B^{(n)}_{R2SU, i} N_0},
\end{equation}
where $P^{(n)}_{R2SU, i}$ and $B^{(n)}_{R2SU, i}$ are the power and bandwidth allocated by $R^{(n)}$ for transmission to $U^{(n)}_{sem,i}$, respectively. Thus, the spectral efficiency is
\begin{equation}
    \eta^{(n)}_{R2SU, i} = \log_2\big(1+\gamma^{(n)}_{R2SU, i}\big),
\end{equation}
and the corresponding bit-rate is
\begin{equation}
\label{conR2SU}
   C^{(n)}_{R2SU, i} = B^{(n)}_{R2SU, i} \eta^{(n)}_{R2SU, i}.
\end{equation}

Let $\mu_2$ denote the average number of bits required to encode one semantic symbol using conventional source and channel encoders. Hence, the effective semantic-symbol rate in this link, which is the number of semantic symbols successfully transmitted per second, is $C^{(n)}_{R2SU, i}/\mu_2$, and the semantic rate (suts/s) is calculated as
\begin{equation}
\label{semantic rate}
    \Psi^{(n)}_{R2SU, i} = \frac{C^{(n)}_{R2SU, i} M}{\mu_2 Q} = \frac{B^{(n)}_{R2SU, i} \eta^{(n)}_{R2SU, i} M}{\mu_2 Q}.
\end{equation}
Accordingly, the semantic-to-bit rate is
\begin{equation}
\label{gammar2su} 
    \Gamma^{(n)}_{R2SU,i} = \mu_1 \frac{\Psi^{(n)}_{R2SU, i}}{M} = \frac{\mu_1 C^{(n)}_{R2SU, i}}{\mu_2 Q}.
\end{equation}

By comparing (\ref{gammar2su}) and (\ref{conR2SU}), it is evident that, under equivalent bandwidth and power allocations, SemCom can lead to improved communication efficiency relative to conventional transmission.


\item \textbf{UAV-to-ConUser:} For an arbitrary ConUser $U^{(n)}_{con,j}$ at $\boldsymbol{\ell}^{(n)}_{CU,i}=(x^{(n)}_{CU,j}, y^{(n)}_{CU,j}, 0)$, the channel power gain for this $R^{(n)}\to U^{(n)}_{con,j}$ link is calculated as
\begin{equation}
\label{channel_R2CU}
    H^{(n)}_{R2CU,j} = \beta_0 \big\lVert \boldsymbol{\ell}^{(n)}_R - \boldsymbol{\ell}^{(n)}_{CU,j} \big\rVert^{-\alpha},
\end{equation}

and the SNR is calculated as
\begin{equation}
    \gamma^{(n)}_{R2CU,j} = \frac{H^{(n)}_{R2CU,j}P^{(n)}_{R2CU,j}}{B^{(n)}_{R2CU,j}N_0},
\end{equation}
where $B^{(n)}_{R2CU,j}$ and $P^{(n)}_{R2CU,j}$ denote the bandwidth and transmit power allocated by $R^{(n)}$ for transmission to $U^{(n)}_{con,j}$. The spectral efficiency is
\begin{equation}
    \eta^{(n)}_{R2CU,j} = \log_2\big(1+ \gamma^{(n)}_{R2CU,j}\big),
\end{equation}

and the achievable bit rate is
\begin{equation}
\label{gammar2cu}
    \Gamma^{(n)}_{R2CU,j} = B^{(n)}_{R2CU,j} \eta^{(n)}_{R2CU,j},
\end{equation}

\end{itemize}


\section{Problem Formulation}

The objective of the system is to maximize the multiuser sum-rate by jointly  allocating the transmit powers $\boldsymbol{P} \triangleq \{P^{(n)}_S, P^{(n)}_{R2SU, i}, P^{(n)}_{R2CU, j}, \forall n \in \mathcal{N}, \forall i \in \mathcal{I}^{(n)}, \forall j \in \mathcal{J}^{(n)}\}$, the bandwidths $\boldsymbol{B} \triangleq \{B^{(n)}_{S2R}, B^{(n)}_{R2SU, i}, B^{(n)}_{R2CU, j}, \forall n \in \mathcal{N}, \forall i \in \mathcal{I}^{(n)}, \forall j \in \mathcal{J}^{(n)}\}$ and determining the UAV positions $\boldsymbol{L} = \{ 
\boldsymbol{\ell}^{(n)}_R, \forall n \in \mathcal{N} \}$. The objective function is given as
\begin{equation}
    \max_{\boldsymbol{P}, \boldsymbol{B}, \boldsymbol{L}} \quad \sum_{n=1}^N(\sum_{i=1}^{I^{(n)}} \Gamma^{(n)}_{R2SU, i} + \sum_{j=1}^{J^{(n)}} \Gamma^{(n)}_{R2CU, j})  .
\end{equation}
In a practical semantic relay system in a SAGIN, the following constraints need to be considered.
\subsubsection{Transmission Rate Constraint of Relay}
The information forwarded from a relay to all associated ground users cannot exceed the information received from $S$:
    \begin{equation}
        \sum_{i=1}^{I^{(n)}} \Gamma^{(n)}_{R2SU, i} + \sum_{j=1}^{J^{(n)}} \Gamma^{(n)}_{R2CU, j} \le \Gamma ^{(n)}_{S2R}, \forall n \in \mathcal{N},
    \end{equation}
    which, by substituting (\ref{gammas2r}), (\ref{gammar2su}) and (\ref{gammar2cu}), is derived as
    \begin{equation*}
    \mathcal{C}_1:   
    \begin{aligned}
        & \sum_{i=1}^{I^{(n)}} \frac{\mu_1 B^{(n)}_{R2SU, i} \eta^{(n)}_{R2SU, i}}{\mu_2 Q}  + \sum_{j=1}^{J^{(n)}} B^{(n)}_{R2CU, j}  \eta^{(n)}_{R2CU, j} \\
        & \le \mu_1 \frac{B^{(n)}_{S2R} }{Q} \varepsilon_Q(r^{(n)}_{S2R}), \forall n \in \mathcal{N}.
    \end{aligned}
    \end{equation*}

\subsubsection{Bandwidth Constraints} 

As mentioned, the links from $S$ to different relays are performed over different frequency bands. The total bandwidth cannot be beyond the maximum available bandwidth $\tilde{B_S}$, defined as follows:
\begin{equation}
\mathcal{C}_2:        \sum_{n=1}^N B^{(n)}_{S2R} \le \tilde{B_S}.
\end{equation}
On the other hand, for each UAV and its associated user cluster, the total bandwidth consumed by receiving the signal from the satellite and its downlink transmissions to all associated ground users is upperly bounded by the maximum available  bandwidth $\tilde{B_R}$, defined as follows:
    \begin{equation*}
    \mathcal{C}_3:        B^{(n)}_{S2R} + \sum_{i=1}^{I^{(n)}} B^{(n)}_{R2SU, i}+ \sum_{j=1}^{J^{(n)}} B^{(n)}_{R2CU, j} \le \tilde{B_R}, \forall n \in \mathcal{N}.
    \end{equation*}
Note that $B^{(n)}_{S2R}$ appears in both constraints $\mathcal{C}_2$ and $\mathcal{C}_3$, leading to resource competition between different hops, which will be analyzed in detail subsequently.

\subsubsection{Power Constraint} The total transmission power from the satellite to all UAVs must not exceed the maximum power of its transmission module:
    \begin{equation}
    \mathcal{C}_4:    \sum_{n=1}^{N} P_S^{(n)} \le \tilde{P_S}.
    \end{equation}
    
 For each UAV, different communication tasks impose distinct computational loads, and the data processing rate must match the data transmission rate. Following the model in \cite{qinload}, let $G_{sem}$ and $G_{con}$ denote the average number of floating-point operations (FLOPs) required per semantic unit for semantic and conventional transmission, respectively, and let $z$ represent the number of FLOPs per computation cycle. The computation frequency $\nu^{(n)}$ required for UAV $n$ can thus be expressed as:
    \begin{equation}
    \begin{aligned}
        \nu^{(n)} =&\frac{G_{sem}\sum_{i=1}^{I^{(n)}} {\Gamma}^{(n)}_{R2SU,i}}{z}+ \frac{G_{con}\sum_{i=1}^{J^{(n)}} \Gamma^{(n)}_{R2CU, j}}{z} \\
        =& \frac{G_{sem}}{z}\sum_{i=1}^{I^{(n)}} \frac{\mu_1 B^{(n)}_{R2SU, i} \eta^{(n)}_{R2SU, i}}{\mu_2 Q} \\
         & + \frac{G_{con}}{z} \sum_{j=1}^{J^{(n)}} B^{(n)}_{R2CU, j} \eta^{(n)}_{R2CU, j}, \forall n \in \mathcal{N} .
    \end{aligned}
    \end{equation}
    According to \cite{qinload}, the computational power consumption can be modeled as $\zeta_0 \{\nu^{(n)} \}^3$, where $\zeta_0$ is the effective switched capacitance coefficient. 
    
    Because the power consumption for UAV propulsion, such as movement and hovering, is significantly larger than that for communication and computation, it is assumed that the power for communication and computation is supplied by a separate power source \cite{zhangruiuav}. Let the power for downlink transmission to $U^{(n)}_{\text{sem}, i}$ and $U^{(n)}_{\text{con}, j}$ be denoted as $P^{(n)}_{R2SU, i}$ and $P^{(n)}_{R2CU, j}$, respectively. Then, the power constraint can be formulated as 
        \begin{equation}
        \mathcal{C}_5:  
        \begin{aligned}
             &\zeta_0 \{  \nu^{(n)} \}^3 + \sum_{i=1}^{I^{(n)}} P^{(n)}_{R2SU, i} + \sum_{j=1}^{J^{(n)}} P^{(n)}_{R2CU, j} \le \tilde{P}_R, \\
            & \forall n \in \mathcal{N},
        \end{aligned}
        \end{equation}
    where $\tilde{P}_R$ denotes the maximum available power for communication and computation.
    

    Based on the above, the joint resource allocation problem can be formulated as follows.
\begin{equation}
    \begin{aligned}
        \text{(P1)} \max_{\boldsymbol{P}, \boldsymbol{B}, \boldsymbol{L}} \quad & \sum_{n=1}^N(\sum_{i=1}^{I^{(n)}} \Gamma^{(n)}_{R2SU, i} + \sum_{j=1}^{J^{(n)}} \Gamma^{(n)}_{R2CU, j})  \\
        \text{s.t.} \quad & \mathcal{C}_1 \sim \mathcal{C}_5.
        \end{aligned}
        \notag
\end{equation}
Problem (P1) involves non-convex constraints such as $\mathcal{C}_1$ and $\mathcal{C}_5$ with coupled variables, and a non-convex objective function. The variables are high-dimensional, and tightly interdependent, making the problem difficult to tackle.

\section{Joint UAV Deployment and Resource Allocation}

To solve (P1), auxiliary variables are introduced to decompose the problem into three solvable convex subproblems. An efficient algorithm is then developed based on alternating optimization of these subproblems, as detailed below.

\subsection{Introducing Auxiliary Variable Sets}
 To decouple these interdependent variables to solve (P1), the following auxiliary variable sets are introduced, detailed below along with their physical interpretations:
\begin{itemize}
    \item \textbf{User channel gain lower bound set} $\boldsymbol{\hat H}$: Defined as $\{ \hat H^{(n)}_{R2SU, i},  \hat H^{(n)}_{R2CU, j}, \forall j \in \mathcal{J}^{(n)}, \forall i \in \mathcal{I}^{(n)}, \forall n \in \mathcal{N} \}$, where $\hat H^{(n)}_{R2SU, i}$ and $\hat H^{(n)}_{R2CU, j}$ denote the lower bounds of the channel power gains for $R^{(n)} \to U^{(n)}_{sem,i}$ link and $R^{(n)} \to U^{(n)}_{con,j}$ link, respectively. These lower bounds satisfy the following constraint:
    \begin{equation}
    \mathcal{C}_6 :
    \left\{
    \begin{aligned}
      & \hat H^{(n)}_{R2SU, i} \le \beta_0 ||\boldsymbol{\ell}^{(n)}_R-\boldsymbol{\ell}^{(n)}_{SU,i}||^{-\alpha}, \\
      & \hat H^{(n)}_{R2CU, j} \le \beta_0 ||\boldsymbol{\ell}^{(n)}_R-\boldsymbol{\ell}^{(n)}_{CU,j}||^{-\alpha}, \\
      & \forall j \in \mathcal{J}^{(n)}, \forall i \in \mathcal{I}^{(n)}, \forall n \in \mathcal{N}.
    \end{aligned}
    \right.
    \end{equation}
    This constraint is reformulated as the following convex expression with respect to $\boldsymbol{\ell}$:
    \begin{equation}
    \mathcal{C}_6' :
    \left\{
    \begin{aligned}
      & ||\boldsymbol{\ell}^{(n)}_R-\boldsymbol{\ell}^{(n)}_{SU,i}||^2 \le \left[\frac{\beta_0  }{\hat H^{(n)}_{R2SU, i}}\right]^{\frac{2}{\alpha}}, \\
      & ||\boldsymbol{\ell}^{(n)}_R-\boldsymbol{\ell}^{(n)}_{CU,j}||^2 \le \left[\frac{\beta_0  }{\hat H^{(n)}_{R2CU, j}}\right]^{\frac{2}{\alpha}}, \\
      & \forall j \in \mathcal{J}^{(n)}, \forall i \in \mathcal{I}^{(n)}, \forall n \in \mathcal{N}.
    \end{aligned}
    \right.
    \end{equation}

    \item \textbf{User and UAV SNR bound set} $\boldsymbol{\hat \gamma}$: Defined as $\{ \hat{r}^{(n)}_{S2R}, \hat \gamma^{(n)}_{R2SU, i}, \hat \gamma^{(n)}_{R2CU, j}, \forall j \in \mathcal{J}^{(n)}, \forall i \in \mathcal{I}^{(n)}, \forall n \in \mathcal{N} \}$, where $\hat{\gamma}^{(n)}_{S2R}$ represents the lower bound of the SNR for the $S \to R^{(n)}$, while $\hat \gamma^{(n)}_{R2SU, i}$ and $\hat \gamma^{(n)}_{R2CU, j}$ denote the upper bounds of the SNRs for $R^{(n)} \to \hat U^{(n)}_{R2SU, i}$ and $R^{(n)} \to \hat U^{(n)}_{R2CU, j}$ links, respectively. The distinct directions of these bounds are designed to ensure compliance with the relay's transmission rate constraints. The constraints are expressed as:
    \begin{equation}
        \mathcal{C}_7: \hat r^{(n)}_{S2R} \le 10 \log_{10} \frac{ | h^{(n)}_{S2R} |^2 P^{(n)}_{S2R}}{N_0 B^{(n)}_{S2R}}, \forall n \in \mathcal{N},
    \end{equation}
    \begin{equation}
        \mathcal{C}_{8} :
        \left\{
        \begin{aligned}
          & \hat \gamma^{(n)}_{R2SU, i} \ge \frac{ \hat H^{(n)}_{R2SU, i} P^{(n)}_{R2SU, i}}{B^{(n)}_{R2SU, i} N_0}, \\
          & \hat \gamma^{(n)}_{R2CU, j} \ge \frac{ \hat H^{(n)}_{R2CU, j} P^{(n)}_{R2CU, j}}{B^{(n)}_{R2CU, j} N_0}, \\
          & \forall j \in \mathcal{J}^{(n)}, \forall i \in \mathcal{I}^{(n)}, \forall n \in \mathcal{N}.
        \end{aligned}
        \right.
    \end{equation}
    Constraint $\mathcal{C}_7$ is alternatively expressed in convex form with respect to $\boldsymbol{B}$ or $\{\hat r^{(n)}_{S2R} \}$:
    \begin{equation}
    \mathcal{C}_7':
    \begin{aligned}
          & N_0 10^{(r^{(n)}_{S2R} / 10)} (B^{(n)}_{S2R})^2 - | h^{(n)}_{S2R} |^2 P^{(n)}_{S2R} B^{(n)}_{S2R} \le  0,\\
          & \forall n \in \mathcal{N}.
    \end{aligned}
    \end{equation}

    \item \textbf{User spectral efficiency upper bound set} $\boldsymbol{\hat \eta}$: Defined as $\{ \hat \eta^{(n)}_{R2SU, i}, \hat \eta^{(n)}_{R2CU, j}, \forall j \in \mathcal{J}^{(n)}, \forall i \in \mathcal{I}^{(n)}, \forall n \in \mathcal{N} \}$, where $\hat \eta^{(n)}_{R2SU, i}$ and $\hat \eta^{(n)}_{R2CU, j}$ represent the upper bounds of the spectral efficiencies for $R^{(n)} \to \hat U^{(n)}_{R2SU, i}$ and $R^{(n)} \to \hat U^{(n)}_{R2CU, j}$ links, respectively. The constraint is given by:
    \begin{equation}
        \left\{
        \begin{aligned}
          & \hat \eta^{(n)}_{R2SU, i} \ge \log_2( 1 + \hat \gamma^{(n)}_{R2SU, i} ), \\
          & \hat \eta^{(n)}_{R2CU, j} \ge \log_2( 1 + \hat \gamma^{(n)}_{R2CU, j} ), \\
          & \forall j \in \mathcal{J}^{(n)}, \forall i \in \mathcal{I}^{(n)}, \forall n \in \mathcal{N}.
        \end{aligned}
        \right.
    \end{equation}
     This is reformulated as:
    \begin{equation}
    \mathcal{C}_{9}:
        \left\{
        \begin{aligned}
          & 2^{-\hat \eta^{(n)}_{R2SU, i}} -  \frac{1}{1 + \hat \gamma^{(n)}_{R2SU, i}} \le 0,   \\
          & 2^{-\hat \eta^{(n)}_{R2CU, j}} -  \frac{1}{1 + \hat \gamma^{(n)}_{R2CU, j}} \le 0, \\
          & \forall j \in \mathcal{J}^{(n)}, \forall i \in \mathcal{I}^{(n)}, \forall n \in \mathcal{N},
        \end{aligned}
        \right.
    \end{equation}
    or equivalently in convex form with respect to $\boldsymbol{\hat \gamma}$:
    \begin{equation}
        \mathcal{C}_{9}' :
        \left\{
        \begin{aligned}
          &  (\hat \gamma^{(n)}_{R2CU, j})^2 - (2^{\hat \eta^{(n)}_{R2CU, j}} -1) \hat \gamma^{(n)}_{R2CU, j} \le 0,\\
          &  (\hat \gamma^{(n)}_{R2CU, j})^2 - (2^{\hat \eta^{(n)}_{R2CU, j}} -1) \hat \gamma^{(n)}_{R2CU, j} \le 0,\\
          & \forall j \in \mathcal{J}^{(n)}, \forall i \in \mathcal{I}^{(n)}, \forall n \in \mathcal{N}.
        \end{aligned}
        \right.
    \end{equation}

    \item \textbf{UAV operation frequency upper bound set} $\boldsymbol{\hat \nu}$: Defined as $\{ \hat{\nu}^{(n)}, \forall n \in \mathcal{N}  \}$, where $\hat{\nu}^{(n)}$ denotes the upper bound of the operation frequency for UAV $R^{(n)}$, constrained by:
    \begin{equation}
    \mathcal{C}_{10}:
    \begin{aligned}
        \hat \nu^{(n)}  \ge & \frac{G_{sem}}{z} \hat \Gamma^{(n)}_{sum,SU} + \frac{G_{con}}{z} \hat \Gamma^{(n)}_{sum,CU} , \forall n \in \mathcal{N} ,
    \end{aligned}
    \end{equation}
    where $\hat \Gamma^{(n)}_{sum,SU}$ and $\hat \Gamma^{(n)}_{sum,CU}$ are defined by the following equations:
    \begin{equation}
        \hat \Gamma^{(n)}_{sum,SU} = \sum_{i=1}^{I^{(n)}} \frac{\mu_1 B^{(n)}_{R2SU, i} \hat \eta^{(n)}_{R2SU, i}}{\mu_2 Q},
    \end{equation}
    \begin{equation}
        \hat \Gamma^{(n)}_{sum,CU} = \sum_{j=1}^{J^{(n)}} B^{(n)}_{R2CU, j}  \hat \eta^{(n)}_{R2CU, j}.
    \end{equation}
    
\end{itemize}

In addition to constraint non-convexities, the objective function of problem (P1) is non-concave with respect to $\boldsymbol{\ell}$. To address this limitation, the original objective is modified as follows:
\begin{equation}
\begin{aligned}
    \mathcal{F}_{obj} =& \sum_{n=1}^N \sum_{i=1}^{I^{(n)}} \frac{\mu_1 B^{(n)}_{R2SU, i}}{\mu_2 Q} \log_2\left(1 + \frac{  \hat H^{(n)}_{R2SU, i}  P^{(n)}_{R2SU, i}}{B^{(n)}_{R2SU, i} N_0 } \right) \\
    &+ \sum_{n=1}^N \sum_{j=1}^{J^{(n)}} B^{(n)}_{R2CU, j} \log_2\left(1 + \frac{  \hat H^{(n)}_{R2CU, j}  P^{(n)}_{R2CU, j}}{B^{(n)}_{R2CU, j} N_0 } \right) .
\end{aligned}
\end{equation}

In the reformulated problem with objective function $\mathcal{F}_{obj}$, constraints $\mathcal{C}_1$ and $\mathcal{C}_5$ are expressed in terms of the new variable sets, i.e.,
\begin{equation}
    \mathcal{C}_{11}:   
    \begin{aligned}
        & \hat \Gamma^{(n)}_{sum,SU}  + \hat \Gamma^{(n)}_{sum,CU}  \le \mu_1 \frac{B^{(n)}_{S2R} }{Q} \varepsilon_Q(\hat r^{(n)}_{S2R}), \forall n \in \mathcal{N},
    \end{aligned}
    \end{equation}
    \begin{equation}
    \mathcal{C}_{12}: 
    \begin{aligned}
          &\zeta_0 \{ \hat \nu^{(n)}\}^3 + \sum_{i=1}^{I^{(n)}} P^{(n)}_{R2SU, i}+ \sum_{j=1}^{J^{(n)}} P^{(n)}_{R2CU, j} \le \tilde{P_{R}}, \\
        & \forall n \in \mathcal{N} ,
    \end{aligned}
    \end{equation}
    where $\mathcal{C}_{11}$ can be rewritten into the form that is convex on ${\hat r^{(n)}_{S2R}}$
    \begin{equation}
    \mathcal{C}_{11}':   
    \begin{aligned}
       & e^{\varepsilon_Q^{-1}[\frac{Q}{\mu_1 B^{(n)}_{S2R}} (\hat \Gamma^{(n)}_{sum,SU} + \hat \Gamma^{(n)}_{sum,CU})] - \hat r^{(n)}_{S2R}} - 1   \le 0 ,\\
       & \forall n \in \mathcal{N},\\
    \end{aligned}
    \end{equation}
    where $ \varepsilon_Q^{-1}( \cdot)$ is the inverse function of $ \varepsilon_Q( \cdot)$
    

Through the above derivations, problem (P1) is reformulated as the following equivalent problem:
\begin{equation}
    \begin{aligned}
        \text{(P2)} \max_{\substack{
    \boldsymbol{P},\boldsymbol{B},\boldsymbol{L},\\
    \boldsymbol{\hat H},\boldsymbol{\hat \gamma}, \boldsymbol{\eta},\boldsymbol{\hat \nu}
        }} \quad 
        & \mathcal{F}_{obj}  \\
        \text{s.t.} \quad & \mathcal{C}_2 \sim \mathcal{C}_4, \mathcal{C}_6 \sim \mathcal{C}_{12}.
        \end{aligned}
        \notag
\end{equation}


\begin{proof} {(\textbf{Equivalence Between (P1) and (P2))}}
For simplicity and without loss of generality, a simplified model with a single cluster and $J$ conventional users is considered. To simplify notation, we omit the cluster index and define: $B_j = B_{\text{R2CU},j}$, $P_j = P_{\text{R2CU},j}$, $\hat H_j = \hat{H}_{\text{R2CU},j}$, $\hat \gamma_j = \hat{\gamma}_{\text{R2CU},j}$, $\hat \eta_j = \hat{\eta}_{\text{R2CU},j}$, $B_s = B_{\text{S2R}}$, $\hat{r} = \hat{r}_{\text{S2R}}$, and $r = r_{\text{S2R}}$.
First, suppose constraint $\mathcal{C}_6$ is not binding for the $j$-th user. Then, $\hat H_j$ can be increased while reducing $B_j$ without affecting the user's rate, thereby allowing $B_s$ to be increased without violating any constraints and increasing the sum-rate—a contradiction. Hence, $\mathcal{C}_6$ must be tight. Next, consider the constraint chain: 
$\sum_{j=1}^J B_j \log_2\left(1 + \frac{\beta_0 \|\ell_R - \ell_j\|^{-\alpha} P_j}{B_j N_0}\right) = \sum_{j=1}^J B_j \log_2\left(1 + \frac{\hat H_j P_j}{B_j N_0}\right) \le \sum_{j=1}^J B_j \log_2(1 + \hat \gamma_j) \le \sum_{j=1}^J B_j \hat \eta_j \le \mu_1 \frac{B_s}{Q} \varepsilon_Q (\hat{r}) \le \mu_1 \frac{B_s}{Q} \varepsilon_Q (r).$
If any inequality in this chain is not tight, the sum-rate can be further increased without constraint violation, contradicting optimality. Therefore, all constraints must hold with equality, ensuring that the feasible sets of (P1) and (P2) are identical, and thus the problems are equivalent.
\end{proof}

However, (P2) remains non-convex due to variable coupling. To overcome this challenge, (P2) is decomposed into three tractable convex subproblems and leverage an alternating optimization approach to iteratively solve them, as detailed in the following subsections.

\subsection{Subproblem 1: Bandwidth Optimization}
The first sub-problem optimizes the bandwidth allocation $\boldsymbol{B}$ with other variables fixed, formulated as:

\begin{equation}
    \begin{aligned}
        \text{(P3)} \min_{ \boldsymbol{B}} \quad &  - \mathcal{F}_{obj}  \\
        \text{s.t.} \quad & \mathcal{C}_2 \sim \mathcal{C}_4,  \mathcal{C}_6, \mathcal{C}_7', \mathcal{C}_{8} \sim \mathcal{C}_{12}.
        \end{aligned}
        \notag
\end{equation}

Problem (P3) is convex and satisfies Slater's condition, which implies strong duality holds between the primal and dual problems. Consequently, the Karush-Kuhn-Tucker (KKT) conditions provide necessary and sufficient conditions for its optimal solution. The following lemma is derived, where $\lambda_{m}$ denotes the dual variable associated with constraint $\mathcal{C}_m$.

\begin{lemma}
\label{subsol1}
The optimal solution to problem (P3) is 
\begin{equation}
    B_{S 2 R}^{*(n)}=\frac{\Lambda^{*(n)}_{B,S2R}+\lambda_{7^{\prime}}^{*(n)}\left|h_{S 2 R}^{(n)}\right|^{2} P_{S 2 R}^{(n)}+\lambda_{11}^{*(n)} \frac{\mu_{1}}{Q} \varepsilon_{Q}\left(\hat{r}_{S 2 R}^{(n)}\right)}{2 \lambda_{7^{\prime}}^{*(n)} N_{0} 10^{\frac{r_{S 2 R}^{(n)}}{10}}} .
\end{equation}
\begin{equation}
    B_{R2SU,i}^{*(n)}=\sqrt{\frac{\lambda_{8,SU,i}^{*(n)}\hat{H}_{R2SU,i}^{(n)}P_{R2SU,i}^{(n)}}{N_{0}\left[-\frac{\partial \mathcal{F}_{obj}}{\partial B_{R2SU,i}^{*(n)} }+\lambda_{3}^{*(n)}+A_{B,SU,i}^{(n)}\right]}},
\end{equation}
\begin{equation}
    B_{R2CU,j}^{*(n)}=\sqrt{\frac{\lambda_{8,CU,j}^{*(n)}\hat{H}_{R2CU,j}^{(n)}P_{R2CU,j}^{(n)}}{N_0\left[-\frac{\partial \mathcal{F}_{obj}}{\partial B_{R2CU,j}^{*(n)} }+\lambda_3^{*(n)}+A_{B,CU,j}^{(n)}\right]}}, 
\end{equation}
with $\Lambda^{*(n)}_{B,S2R} = -\left(\lambda_{2}^{*}+\lambda_{3}^{*(n)}\right) $, $A_{B,SU,i}^{(n)}=\lambda_{10}^{*(n)}\frac{G_{sem}}{z}\frac{\mu_{1}}{\mu_{2}Q}\hat{\eta}_{R2SU,i}^{(n)}+\lambda_{11}^{*(n)}\frac{\mu_{1}}{\mu_{2}Q}\hat{\eta}_{R2SU,i}^{(n)}$, $A_{B,CU,j}^{(n)}=\lambda_{10}^{*(n)}\frac{G_{con}}{z}\hat{\eta}_{R2CU,j}^{(n)}+\lambda_{11}^{*(n)}\hat{\eta}_{R2CU,j}^{(n)}$, where $\{ \lambda_m^* \}$ are the optimal dual variables that can be obtained via the primal-dual algorithm. 
\end{lemma}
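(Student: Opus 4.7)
The plan is to exploit the convexity of (P3) and obtain the three expressions directly from the Karush--Kuhn--Tucker (KKT) stationarity conditions. First I would verify that (P3) is convex with the other variables fixed: the objective $-\mathcal{F}_{obj}$ is a sum of terms of the form $-B\log_2(1+c/B)$ with $c>0$, each of which is the perspective of a concave function and hence convex in $B$; constraints $\mathcal{C}_{2}$--$\mathcal{C}_{4}$, $\mathcal{C}_{6}$, $\mathcal{C}_{8}$, $\mathcal{C}_{10}$, $\mathcal{C}_{11}$, $\mathcal{C}_{12}$ are linear in $\boldsymbol{B}$, while $\mathcal{C}_{7}'$ is quadratic-convex in $B_{S2R}^{(n)}$. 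Slater's condition is met by any strictly interior allocation with bandwidth slack, so strong duality holds and the KKT conditions are both necessary and sufficient. I would then form the Lagrangian by attaching a nonnegative multiplier $\lambda_m$ to each $(\cdot)\le 0$ constraint and write down $\partial\mathcal{L}/\partial B = 0$ for each variable family.

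For $B_{S2R}^{(n)}$, the key observation is that this variable does not appear in $\mathcal{F}_{obj}$, so the stationarity equation involves only $\mathcal{C}_{2}$, $\mathcal{C}_{3}$, $\mathcal{C}_{7}'$, and $\mathcal{C}_{11}$. Differentiating the quadratic term of $\mathcal{C}_{7}'$ produces the linear factor $2\lambda_{7'}^{(n)} N_0 10^{r_{S2R}^{(n)}/10}B_{S2R}^{(n)}$, while the other three constraints contribute the constants $\lambda_2$, $\lambda_3^{(n)}$, $-\lambda_{7'}^{(n)}|h_{S2R}^{(n)}|^2 P_{S2R}^{(n)}$, and $-\lambda_{11}^{(n)}\mu_1\varepsilon_Q(\hat r_{S2R}^{(n)})/Q$. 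Collecting terms and solving the resulting linear equation immediately yields the stated formula with $\Lambda^{*(n)}_{B,S2R}=-(\lambda_{2}^{*}+\lambda_{3}^{*(n)})$.

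For $B_{R2SU,i}^{(n)}$ and $B_{R2CU,j}^{(n)}$, the essential structural feature is that these variables enter $\mathcal{C}_{8}$ in the denominator (through $\hat H P/(B N_0)$), producing a $1/B^{2}$ term upon differentiation. The remaining contributions are all linear in $B$: $-\partial\mathcal{F}_{obj}/\partial B$ from the objective, $\lambda_{3}^{(n)}$ from $\mathcal{C}_{3}$, and the coefficients grouped into $A_{B,\star}^{(n)}$ arising from $\mathcal{C}_{10}$ (computational power, which contributes the $G_{sem}/z$ or $G_{con}/z$ prefactor times $\hat\eta$) and $\mathcal{C}_{11}$ (the relay rate cap, contributing the analogous $\hat\eta$ term). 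Setting the stationarity condition to zero gives an equation of the form $\lambda_{8}^{*(n)}\hat H P/(B^{2}N_0)=\text{(linear terms)}$, and taking the positive root produces the square-root expression in the lemma.

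The main obstacle is purely bookkeeping: one must correctly enumerate every constraint in which each bandwidth variable appears, track the sign conventions consistent with the $(\cdot)\le 0$ form of every constraint, and carefully handle the $\hat\eta^{(n)}_{R2SU,i}$ and $\hat\eta^{(n)}_{R2CU,j}$ factors that multiply $B$ inside $\mathcal{C}_{10}$ and $\mathcal{C}_{11}$. Once the partial derivatives are assembled correctly, the three closed-form expressions follow by elementary algebra, and the dual variables $\{\lambda_m^*\}$ are then recovered numerically via the primal--dual algorithm as stated.
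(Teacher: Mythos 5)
Your proposal follows essentially the same route as the paper's Appendix~A proof: establish convexity and Slater's condition for (P3), invoke strong duality and the KKT conditions, form the Lagrangian over $\mathcal{C}_2$, $\mathcal{C}_3$, $\mathcal{C}_7'$, $\mathcal{C}_8$, $\mathcal{C}_{10}$, $\mathcal{C}_{11}$, set the stationarity conditions for each bandwidth variable to zero (linear in $B_{S2R}^{(n)}$ via the quadratic $\mathcal{C}_7'$ term, and a $1/B^2$ term from $\mathcal{C}_8$ for the user bandwidths), and recover the duals by the primal--dual method. The only caveat is that $-\partial\mathcal{F}_{obj}/\partial B$ is not actually constant in $B$, so the square-root expressions are implicit rather than closed-form --- a point the paper itself acknowledges immediately after the lemma --- but this does not change the derivation.
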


\begin{proof}
    Please refer to Appendix A.
\end{proof}
Since the optimal bandwidth expressions involve themselves implicitly via $\frac{\partial \mathcal{F}_{obj}}{\partial B}$, a closed-form solution may not always be obtainable. In practice, one can use iterative numerical methods such as the bisection method or Newton-Raphson method to solve for $B_{R2SU,i}^{(n)}$ and $B_{R2CU,j}^{(n)}$ given the dual variables and system parameters. This process is embedded within the primal-dual optimization loop to ensure convergence.

\begin{remark}{\textup{(Proposed Bandwidth Allocation on Devices)}}
    The optimal bandwidths exhibit the following trends. First, within each cluster $n$, both $B_{R2SU,i}^{(n)}$ and $B_{R2CU,j}^{(n)}$ grow with the link quality terms $\hat H_{R2SU,i}^{(n)}$ (or $\hat H_{R2CU,j}^{(n)}$) and transmit powers $P_{R2SU,i}^{(n)}$ (or $P_{R2CU,j}^{(n)}$), since better physical channels allow more efficient use of spectrum. Moreover, for identical channel and power, semantic users receive larger $B_{R2SU,i}^{(n)}$ than conventional users. Second, across clusters, the satellite-to-relay bandwidth $B_{S2R}^{(n)}$ increases with the downlink channel gain $|h_{S2R}^{(n)}|^2$ and the semantic quantization utility $\varepsilon_Q(\hat r_{S2R}^{(n)})$, indicating that clusters with better physical conditions or more semantically valuable content are prioritized.  
\end{remark}

Additionally, by analyzing the structure of the constraints and the objective function, the following lemma can be established:

\begin{lemma}
    For the optimal bandwidth allocation $\boldsymbol{B}$, constraint $\mathcal{C}_{11}$ must be tight, i.e.,
\begin{equation}
\begin{aligned}
    &\hat \Gamma^{(n)}_{sum,SU}  + \hat \Gamma^{(n)}_{sum,CU}  = \mu_1 \frac{B_{S2R}^{*(n)} }{Q} \varepsilon_Q(\hat r^{(n)}_{S2R}),\\
\end{aligned}
\end{equation}
which can be rearranged as:
\begin{equation}
   B_{S2R}^{*(n)} = \frac{Q}{\mu_1 \varepsilon_Q(\hat r^{(n)}_{S2R})}\left( \hat \Gamma^{(n)}_{sum,SU}  + \hat \Gamma^{(n)}_{sum,CU} \right).
\end{equation}
\end{lemma}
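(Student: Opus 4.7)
The plan is to prove the lemma by a standard tightness-via-contradiction argument. Suppose, toward contradiction, that at some optimal $\boldsymbol{B}^*$ the constraint $\mathcal{C}_{11}^{(n)}$ holds strictly for some cluster $n$, i.e., $\hat \Gamma^{(n)}_{sum,SU} + \hat \Gamma^{(n)}_{sum,CU} < \mu_1 B^{*(n)}_{S2R} \varepsilon_Q(\hat r^{(n)}_{S2R}) / Q$. I would exploit the key observation that $B^{(n)}_{S2R}$ does not appear in $\mathcal{F}_{obj}$, so perturbing it downward changes only the constraints in which it appears, leaving the objective value unchanged until we reinvest the freed bandwidth elsewhere.

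I would first reduce $B^{*(n)}_{S2R}$ by a sufficiently small $\delta > 0$; by the assumed strict slack, this keeps $\mathcal{C}_{11}^{(n)}$ satisfied, slackens $\mathcal{C}_2$ and $\mathcal{C}_3^{(n)}$ by $\delta$, and does not violate $\mathcal{C}_7'$ since the quadratic LHS of $\mathcal{C}_7'$ in $B^{(n)}_{S2R}$ is monotonically decreasing as $B^{(n)}_{S2R}$ drops from its feasible value toward zero. Then I would redirect these freed $\delta$ units of bandwidth to some downlink link, say $B^{(n)}_{R2SU,i}$, with the other decision variables held fixed. Because the map $B \mapsto B \log_2(1 + c/B)$ is strictly increasing in $B$ for any $c > 0$ (seen via the substitution $x = c/B$ and the monotonicity of $\ln(1+x)/x$), the resulting $\mathcal{F}_{obj}$ strictly grows, contradicting optimality. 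Tightness of $\mathcal{C}_{11}^{(n)}$ then follows, and the claimed closed-form for $B^{*(n)}_{S2R}$ is obtained by solving the now-active equality.

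The main obstacle I expect is verifying that the bandwidth reallocation does not violate the remaining constraints, particularly $\mathcal{C}_{10}$, because raising $B^{(n)}_{R2SU,i}$ also increases $\hat \Gamma^{(n)}_{sum,SU}$ linearly through the fixed $\hat \eta^{(n)}_{R2SU,i}$. I would handle this by either choosing $\delta$ small enough that existing margin in $\mathcal{C}_{10}$ absorbs the increment (the generic case when $\mathcal{C}_{10}$ is slack), or, when $\mathcal{C}_{10}$ happens to be tight, by a joint perturbation that simultaneously lowers some other downlink bandwidth while still yielding a net increase in $\mathcal{F}_{obj}$ through the concavity-style slack in the increasing map above. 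Alternatively, I can invoke complementary slackness directly from the KKT expression for $B^{*(n)}_{S2R}$ in Lemma~\ref{subsol1}: the dual multiplier $\lambda^{*(n)}_{11}$ enters the numerator positively, and at any non-degenerate optimum it must be strictly positive, which immediately forces $\mathcal{C}_{11}^{(n)}$ to hold with equality.
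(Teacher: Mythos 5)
Your proposal is correct and follows essentially the same route as the paper's own proof: a tightness-by-contradiction exchange argument in which slack in $\mathcal{C}_{11}$ lets you shift a small amount of bandwidth from the space-to-air link to an air-to-ground link and strictly increase $\mathcal{F}_{obj}$ via the monotonicity of $B\mapsto B\log_2(1+c/B)$. You are in fact more careful than the paper, which simply asserts the reallocation violates no constraints, whereas you explicitly check $\mathcal{C}_7'$ and flag the possible tightness of $\mathcal{C}_{10}$ (a case the paper silently ignores); only your closing complementary-slackness shortcut is weaker, since positivity of $\lambda^{*(n)}_{11}$ is asserted rather than established.
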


\begin{proof}
    It is easy to verify that both the left-hand side (LHS) and right-hand side (RHS) of the constraint are monotonically increasing functions of the allocated bandwidth. For the optimal bandwidth allocation, if constraint $\mathcal{C}_{11}$ is not tight, i.e., $\hat \Gamma^{(n)}_{sum,SU}  + \hat \Gamma^{(n)}_{sum,CU}  < \mu_1 \frac{B_{S2R}^{*(n)} }{Q} \varepsilon_Q(\hat r^{(n)}_{S2R}$), then the LHS can always be increased by reallocating more bandwidth to the air-to-ground link while correspondingly reducing the bandwidth allocated to the space-to-air link, without violating any constraints. Since the LHS directly influences the objective function and a higher LHS corresponds to a better objective value, such an adjustment would improve the optimality of the solution. This contradicts the assumption of optimality, implying that $\mathcal{C}_{11}$ must be tight at the optimum.
\end{proof}

\begin{remark}{\textup{(Proposed Bandwidth Allocation in Different Layers)}}
     Constraint $\mathcal{C}_{11}$ reveals that for communication in the $n$-th cluster, the bandwidth allocated to the satellite-to-relay layer and the relay-to-users layer share a common cluster bandwidth constraint. Consequently, these layers compete when the available bandwidth in the cluster is limited. This competition implies that the layer with inferior spectral efficiency is allocated more bandwidth, as the sum-rate across both layers must be balanced at optimality.
\end{remark}

\subsection{Sub-problem 2: Auxiliary variables optimization }
The second sub-problem optimizes the auxiliary variables, including users' spectral efficiency upper bound $\boldsymbol{\hat \eta}$, channel gain lower bound $\boldsymbol{\hat H}$, and UAVs' computation frequency upper bound $\boldsymbol{\hat \nu}$, with other variables fixed. This is formulated as:
\begin{equation}
    \begin{aligned}
        \text{(P4)} \min_{ \boldsymbol{\hat \eta}, \boldsymbol{\hat H},\boldsymbol{\hat \nu}   }   \quad &  - \mathcal{F}_{obj}  \\
        \text{s.t.} \quad & \mathcal{C}_2 \sim \mathcal{C}_4, \mathcal{C}_6 \sim \mathcal{C}_{12}
        \end{aligned}
        \notag
\end{equation} 

\begin{lemma}
\label{subsol2}
    The optimal solution to problem (P4) is derived as:
\begin{equation}
        \hat H^{*(n)}_{R2SU,i} = \frac{1}{A^{*(n)}_{\hat H, SU,i}}
\left[
\frac{\tfrac{\mu_1 B^{(n)}_{R2SU,i}}{\mu_2 Q \ln 2}\,A^{*(n)}_{\hat H, SU,i}}
{\lambda^{*(n)}_{6,SU,i} + \lambda^{*(n)}_{8,SU,i}\,A^{*(n)}_{\hat H, SU,i}}
- 1
\right],
\end{equation}
\begin{equation}
    \hat H^{*(n)}_{R2CU,j} = \frac{1}{A^{*(n)}_{\hat H, CU,j}}
\left[
\frac{\tfrac{B^{(n)}_{R2CU,j}}{\ln 2}\,A^{*(n)}_{\hat H, CU,j}}
{\lambda^{*(n)}_{6,CU,j} + \lambda^{*(n)}_{8,CU,j}\,A^{*(n)}_{\hat H, CU,j}}
- 1
\right],
\end{equation}
\begin{equation}
    \hat \eta^{*(n)}_{R2SU,i}= -\log_2(
\frac{\lambda^{*(n)}_{10}\tfrac{G_{sem}}{z}\tfrac{\mu_1 B^{(n)}_{R2SU,i}}{\mu_2 Q}
+\lambda^{*(n)}_{11}\tfrac{\mu_1 B^{(n)}_{R2SU,i}}{\mu_2 Q}}
{\lambda^{*(n)}_{9,SU,i}\ln2}),
\end{equation}
\begin{equation}
     \hat \eta^{*(n)}_{R2CU,j}= -\log_2(
\frac{\lambda^{*(n)}_{10}\tfrac{G_{con}}{z}B^{(n)}_{R2CU,j}
+\lambda^{*(n)}_{11}B^{(n)}_{R2CU,j}}
{\lambda^{*(n)}_{9,CU,j}\ln2}
),
\end{equation}
\begin{equation}
    \hat \nu^{*(n)} = \Bigl(\frac{\lambda^{*(n)}_{10}}{3\,\lambda^{*(n)}_{12}\,\zeta_0}\Bigr)^{\!1/2},
\end{equation}
with $
A^{*(n)}_{\hat H,SU,i} = \frac{P^{(n)}_{R2SU,i}}{B^{(n)}_{R2SU,i}N_0},
\quad
A^{*(n)}_{\hat H,CU,j} = \frac{P^{(n)}_{R2CU,j}}{B^{(n)}_{R2CU,j}N_0},
$
where $\{\lambda^{*}_{m}\}$ are the optimal dual variables obtained via the primal–dual algorithm.
\end{lemma}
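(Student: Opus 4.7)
The plan is to mirror the KKT-based strategy used for Lemma~\ref{subsol1}. First, with all other variables held fixed, I would verify that (P4) is a convex program: $-\mathcal{F}_{obj}$ is convex in $\boldsymbol{\hat H}$ because each term $-B\log_2(1+\hat H P/(BN_0))$ is convex in $\hat H$; constraints $\mathcal{C}_6$ and $\mathcal{C}_8$ are linear in $\hat H$; $\mathcal{C}_9$ supplies the convex term $2^{-\hat \eta}$; $\mathcal{C}_{10}$ and $\mathcal{C}_{11}'$ are affine in $\hat \eta$; and $\mathcal{C}_{12}$ contains the convex $\zeta_0(\hat\nu)^3$. Together with Slater's condition, this ensures that the KKT conditions are both necessary and sufficient for optimality.

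Next, I would form the Lagrangian by attaching multipliers $\lambda_{6}, \lambda_{8}, \lambda_{9}, \lambda_{10}, \lambda_{11}, \lambda_{12}$ to the corresponding constraints and exploit the key structural observation that the three classes of auxiliary variables decouple across clusters and users: $\hat H^{(n)}_{R2SU,i}$ enters only the objective, $\mathcal{C}_6$, and $\mathcal{C}_8$; $\hat \eta^{(n)}_{R2SU,i}$ enters only $\mathcal{C}_9$, $\mathcal{C}_{10}$, and $\mathcal{C}_{11}$; $\hat \nu^{(n)}$ enters only $\mathcal{C}_{10}$ and $\mathcal{C}_{12}$, with the ConUser variables appearing in analogous subsets. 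Each stationarity condition therefore reduces to a single-variable equation.

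Then I would carry out the four stationarity calculations. For $\hat H^{(n)}_{R2SU,i}$, with $A^{*(n)}_{\hat H, SU,i}=P^{(n)}_{R2SU,i}/(B^{(n)}_{R2SU,i}N_0)$, setting $\partial L/\partial \hat H = 0$ yields $\tfrac{\mu_1 B^{(n)}_{R2SU,i}}{\mu_2 Q \ln 2}\cdot \tfrac{A^{*(n)}_{\hat H,SU,i}}{1+\hat H A^{*(n)}_{\hat H,SU,i}} = \lambda^{*(n)}_{6,SU,i}+\lambda^{*(n)}_{8,SU,i}A^{*(n)}_{\hat H,SU,i}$, which rearranges directly to the stated closed form, and the ConUser formula follows by identical algebra. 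For $\hat \eta^{(n)}_{R2SU,i}$, the Lagrangian is linear in $2^{-\hat \eta}$, so stationarity gives $\lambda^{*(n)}_{9,SU,i}\ln 2\cdot 2^{-\hat \eta}=\lambda^{*(n)}_{10}\tfrac{G_{sem}}{z}\tfrac{\mu_1 B^{(n)}_{R2SU,i}}{\mu_2 Q}+\lambda^{*(n)}_{11}\tfrac{\mu_1 B^{(n)}_{R2SU,i}}{\mu_2 Q}$, which inverts by applying $-\log_2(\cdot)$; the ConUser case is analogous. Finally, stationarity in $\hat \nu^{(n)}$ reduces to $3\lambda^{*(n)}_{12}\zeta_0(\hat \nu)^2 = \lambda^{*(n)}_{10}$, producing the square-root expression.

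The main obstacle will be bookkeeping rather than analysis: one must correctly identify which constraints contain each auxiliary variable (since $\mathcal{C}_8$–$\mathcal{C}_{11}$ all involve overlapping subsets of $\hat H$, $\hat \eta$, and $\hat \nu$) and differentiate through the convex reformulations $\mathcal{C}_9$ and $\mathcal{C}_{11}'$ so that each stationarity equation reduces to an analytically invertible scalar equation. As in Lemma~\ref{subsol1}, the optimal duals $\{\lambda^*_m\}$ are not available in closed form and must be obtained via the outer primal–dual iteration, so the expressions above constitute the inner-loop primal updates rather than a fully explicit solution.
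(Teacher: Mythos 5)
Your proposal is correct and takes essentially the same route as the paper: the paper proves Lemma~\ref{subsol2} only by remarking that it follows the same KKT stationarity and primal--dual computation as Lemma~\ref{subsol1}, and your per-variable stationarity equations (using the concavity of $\log_2(1+\hat H A)$ in $\hat H$, the $2^{-\hat\eta}$ term from $\mathcal{C}_9$, the linear-in-$\hat\Gamma_{sum}$ form of $\mathcal{C}_{10}$ and $\mathcal{C}_{11}$, and the cubic term in $\mathcal{C}_{12}$) reproduce exactly the stated closed forms. The only caveat, which you already flag, is that the duals $\{\lambda_m^*\}$ remain implicit and are supplied by the outer primal--dual loop, exactly as in the paper's treatment of (P3).
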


\begin{proof}
    Please refer to Appendix A.
\end{proof}


\subsection{Sub-problem 3: UAV location and system power allocation optimization}
The third sub-problem optimizes the system power allocation $\boldsymbol{P}$, UAV positions $\boldsymbol{\ell}$ and the auxiliary variable SNR bound set $\boldsymbol{\hat \gamma}$, with other given variables, which is reformulated as
\begin{equation}
    \begin{aligned}
        \text{(P5)} \min_{ \boldsymbol{\hat \gamma}, \boldsymbol{ P},\boldsymbol{ L}   }   \quad &  - \mathcal{F}_{obj}  \\
        \text{s.t.} \quad & \mathcal{C}_2 \sim \mathcal{C}_4, \mathcal{C}_6', \mathcal{C}_7, \mathcal{C}_{8}, \mathcal{C}_{9}', \mathcal{C}_{10}, \mathcal{C}_{11}', \mathcal{C}_{12}
        \end{aligned}
        \notag
\end{equation}

Note that problem (P5) is convex and satisfies Slater's condition, implying strong duality between the primal and dual problems. Consequently, the Karush-Kuhn-Tucker (KKT) conditions are applied to solve (P5) optimally. The following lemma is derived, where $\lambda_m^{(n)}$ denotes the dual variable associated with constraint $\mathcal{C}_m$.

\begin{lemma}
\label{subsol3}
    The optimal solution to problem (P5) is derived as
\begin{equation}
    P_{S2R}^{*(n)}
    = \frac{10\,\lambda^{*(n)}_{7}}
           {\lambda^{*}_4\,\ln 10},
\end{equation}
\begin{equation}
    \label{P_R2SU_sol}
P^{*(n)}_{R2SU,i}
= \frac{\frac{\mu_1 B^{(n)}_{R2SU,i}}{\mu_2 Q\ln2}}{\lambda^{*(n)}_{8,SU,i} A^{(n)}_{P,SU,i} + \lambda^{*(n)}_{12}} - \frac{1}{A^{(n)}_{P,SU,i}},
\end{equation}
\begin{equation}
   P^{(n)}_{R2CU,j}
= \frac{\frac{B^{(n)}_{R2SU,i}}{\ln2}}{ \lambda^{*(n)}_{8,CU,j} A^{(n)}_{P,SU,i} + \lambda^{*(n)}_{12}} - \frac{1}{A^{(n)}_{P,CU,j}},
\end{equation}
\begin{equation}
\label{L_R_centroid}
\ell^{*(n)}_R = (\hat x^{*(n)}_R, \hat y^{*(n)}_R, H^{(n)}_R).
\end{equation}
\begin{equation}
    \hat \gamma^{*(n)}_{S2R}
    = E^{(n)} - \ln\!\bigl(\tfrac{\lambda^{*(n)}_{7}}{\lambda^{*(n)}_{11 \prime}}\bigr),
    \quad \forall n,
\end{equation}

\begin{equation}
    \hat \gamma^{*(n)}_{R2SU,i}
    = \frac{\lambda^{*(n)}_{8,SU,i}
            + \lambda^{*(n)}_{9',SU,i}\bigl(2^{\hat\eta^{(n)}_{R2SU,i}} -1\bigr)}
           {2\,\lambda^{*(n)}_{9',SU,i}},
    \,
\end{equation}

\begin{equation}
    \hat \gamma^{*(n)}_{R2CU,j}
    = \frac{\lambda^{*(n)}_{8,CU,j}
            + \lambda^{*(n)}_{9',CU,j}\bigl(2^{\hat\eta^{(n)}_{R2CU,j}} -1\bigr)}
           {2\,\lambda^{*(n)}_{9',CU,j}},
    \
\end{equation}
where $
E^{(n)} = \varepsilon_Q^{-1}\Bigl[\tfrac{Q}{\mu_1 B^{(n)}_{S2R}}(\hat\Gamma^{(n)}_{sum,SU}+\hat\Gamma^{(n)}_{sum,CU})\Bigr],
A^{(n)}_{P,SU,i}=\frac{\hat H^{(n)}_{R2SU,i}}{B^{(n)}_{R2SU,i}N_0},
A^{(n)}_{P,CU,j}=\frac{\hat H^{(n)}_{R2CU,j}}{B^{(n)}_{R2CU,j}N_0},
x^{*(n)}_R = \frac{ \sum_{i=1}^{I^{(n)}} \lambda^{*(n)}_{6',SU,i} x_{SU,i} + \sum_{j=1}^{J^{(n)}} \lambda^{*(n)}_{6',CU,j} x_{CU,j} }{ \sum_{i=1}^{I^{(n)}} \lambda^{*(n)}_{6',SU,i} + \sum_{j=1}^{J^{(n)}} \lambda^{*(n)}_{6',CU,j} }$, $y^{*(n)}_R = \frac{ \sum_{i=1}^{I^{(n)}} \lambda^{*(n)}_{6',SU,i} y_{SU,i} + \sum_{j=1}^{J^{(n)}} \lambda^{*(n)}_{6',CU,j} y_{CU,j} }{ \sum_{i=1}^{I^{(n)}} \lambda^{*(n)}_{6',SU,i} + \sum_{j=1}^{J^{(n)}} \lambda^{*(n)}_{6',CU,j} }
$
and $\{\lambda_m^*\}$ are obtained via the primal–dual algorithm.
\end{lemma}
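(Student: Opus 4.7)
The plan is to argue that problem (P5) is a convex program satisfying Slater's condition, so that the KKT conditions are both necessary and sufficient for optimality, and then to read off each displayed expression by setting the gradient of the Lagrangian in the corresponding primal variable to zero. Convexity holds because $-\mathcal{F}_{obj}$ is a sum of convex perspective-type terms $-B\log_2(1+\hat H P/(BN_0))$ in $\boldsymbol{P}$; $\mathcal{C}_6'$ is convex quadratic in $\boldsymbol{L}$; $\mathcal{C}_9'$ is convex quadratic in $\hat{\boldsymbol{\gamma}}$; $\mathcal{C}_{11}'$ is convex via the exponential of an affine-in-$\hat r^{(n)}_{S2R}$ function; and the remaining constraints are affine.

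I would then attach multiplier $\lambda_m^{(n)}$ to each inequality $\mathcal{C}_m$, form the Lagrangian, and take stationarity in each primal variable. A convenient feature is that every variable appearing in the lemma touches only a small number of constraints, so the stationarity system decouples. For $P^{(n)}_{S2R}$ only $\mathcal{C}_4$ and the $10\log_{10}(P/B)$ term inside $\mathcal{C}_7$ contribute, yielding the stated ratio of duals after using $\partial \log_{10}(\cdot)/\partial P = 1/(P\ln 10)$. For $P^{(n)}_{R2SU,i}$ and $P^{(n)}_{R2CU,j}$, the derivative of the concave $\log_2$ term in $\mathcal{F}_{obj}$ balances the affine contributions from $\mathcal{C}_8$ and $\mathcal{C}_{12}$, producing the waterfilling-style formula in \eqref{P_R2SU_sol}. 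The upper SNR bounds $\hat\gamma^{(n)}_{R2SU,i}$ and $\hat\gamma^{(n)}_{R2CU,j}$ enter only $\mathcal{C}_8$ linearly and $\mathcal{C}_9'$ quadratically, so stationarity reduces to a linear equation matching the lemma. For the UAV horizontal location, only $\mathcal{C}_6'$ involves $\boldsymbol{\ell}^{(n)}_R$; differentiating $\sum_i \lambda^{(n)}_{6',SU,i}\,\|\boldsymbol{\ell}^{(n)}_R-\boldsymbol{\ell}^{(n)}_{SU,i}\|^2 + \sum_j \lambda^{(n)}_{6',CU,j}\,\|\boldsymbol{\ell}^{(n)}_R-\boldsymbol{\ell}^{(n)}_{CU,j}\|^2$ with respect to $(x^{(n)}_R,y^{(n)}_R)$ and setting it to zero gives the weighted centroid in \eqref{L_R_centroid}, each dual acting as a weight that measures how tightly the corresponding user's channel-gain bound binds; the altitude is pinned to $H_R$ by the hovering assumption.

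The step I expect to be the main obstacle is the stationarity for the dB-domain bound $\hat r^{(n)}_{S2R}$ entering the exponential constraint $\mathcal{C}_{11}'$. Its derivative is $-e^{E^{(n)} - \hat r^{(n)}_{S2R}}$, which must cancel the linear contribution from $\mathcal{C}_7$; solving this transcendental equation produces the offset $\ln(\lambda^{*(n)}_{7}/\lambda^{*(n)}_{11'})$ and explains the appearance of $\varepsilon_Q^{-1}$ inside $E^{(n)}$. I would also need to verify that the relevant multipliers are strictly positive so that the logarithm is well defined, which follows from the tightness of $\mathcal{C}_{11}'$ at optimum (by the same exchange argument used in the equivalence proof between (P1) and (P2)). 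Once this step is handled, the remaining formulas follow by routine algebra, and the optimal multipliers $\{\lambda^{*}_m\}$ are obtained numerically via the primal--dual algorithm enclosing the alternating-optimization loop.
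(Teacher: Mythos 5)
Your proposal is correct and follows essentially the same route as the paper: the paper proves Lemma~\ref{subsol1} by forming the Lagrangian of the convex subproblem, applying KKT stationarity in each primal variable, and recovering the dual variables via a primal--dual update, and it then asserts that Lemma~\ref{subsol3} follows by the identical procedure. Your write-up simply carries out for (P5) the details the paper leaves implicit (including the stationarity for $\hat r^{(n)}_{S2R}$ in $\mathcal{C}_{11}'$ and the weighted-centroid derivation from $\mathcal{C}_6'$), and each of your stationarity computations checks out against the stated formulas.
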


\begin{proof}
    Please refer to Appendix A.
\end{proof}

\begin{remark}{\textup{(Proposed UAV Location and System Power Allocation)}}
The optimal variables exhibit the following trends. For power allocation within each cluster $n$, $P_{R2SU,i}^{*(n)}$ and $P_{R2CU,j}^{*(n)}$ increase with the allocated bandwidth, indicating greater power allocation to channels with superior conditions or higher semantic importance. For UAV deployment, the optimal UAV position $\mathbf{\ell}^{*(n)}_R$ represents a weighted centroid of user locations, thereby balancing proximity to both SemUsers and ConUsers. 
\end{remark}

\subsection{Overall Alternating Optimization Algorithm}
Based on the solutions of the three convex sub-problems (P3), (P4) and (P5), the alternating optimization approach is adopted to solve (P2), as summarized in Algorithm 1. The algorithm commences with comprehensive initialization to ensure feasibility of (P3) by selecting initial values of $\boldsymbol{B}$, $\boldsymbol{P}$, $\boldsymbol{\ell}$, $\boldsymbol{\hat H}$, $\boldsymbol{\hat \eta}$, $\boldsymbol{\hat \gamma}$, and $\boldsymbol{\hat \nu}$. Subsequently, (P3), (P4) and (P5) are sequentially and iteratively solved by fixing the variables of each other. 

\begin{algorithm}[]
	\caption{Joint Optimization of Bandwidth Allocation, Power Allocation and UAV Placement}\label{Alg:Solution}
	\LinesNumbered
	\KwIn{ $\{|h^{(n)}_{S2R}|, \boldsymbol{\ell}^{(n)}_{SU,i}, \boldsymbol{\ell}^{(n)}_{CU,j}, \forall n \in \mathcal{N}, \forall i \in \mathcal{I}^{(n)}, \forall j \in \mathcal{J}^{(n)}\}$, $\tilde{P}_S$, $\tilde{B}_S$, $\tilde{P}_R$, $\tilde{B}_R$, $\mu_1$, $\mu_2$, $Q$. } 
    \textbf{Initialize} $\boldsymbol{B}$, $\boldsymbol{P}$, $\boldsymbol{L}$, $\boldsymbol{\hat H}$, $\boldsymbol{\hat \eta}$, $\boldsymbol{\hat \gamma}$, $\boldsymbol{\hat \nu}$.\\
    \textbf{Loop}\\   
     \quad Obtain optimal solution to (P3) via Lemma \ref{subsol1}.\\
	\quad Obtain optimal solution to (P4) via Lemma \ref{subsol2}.\\
    \quad Obtain optimal solution to (P5) via Lemma \ref{subsol3}.\\
    \textbf{Until Convergence}.\\
    \KwOut{ $\boldsymbol{B}^{*}$, $\boldsymbol{P}^{*}$, $\boldsymbol{L}^{*}$, $\boldsymbol{\hat \gamma}^{*}$, $\boldsymbol{\hat \eta}^{*}$, $\boldsymbol{\hat H}^{*}$, $\boldsymbol{\hat \nu}^{*}$.}
\end{algorithm}


Algorithm 1 needs to be executed to solve (P2), which is equivalent to (P1), only when the network settings change, including channel state information, user locations, power budgets, bandwidth constraints, and semantic parameters. Particularly, in the case where the network settings vary over different transmission frames, Algorithm 1 should be executed in each frame to solve (P2). The complexity of Algorithm 1 depends on alternatively solving the three convex subproblems (P3), (P4), and (P5), whose computation complexities are all \(\mathcal{O}(K^2)\) with \(K\) being the total number of users. Note that the execution of Algorithm 1 is at the central controller with powerful processors, its computation load can be ignored.

\section{Numerical Result}
\begin{figure*}[h]
  \centering
  \begin{minipage}{0.3\linewidth}
    \centering
    \includegraphics[width=\textwidth]{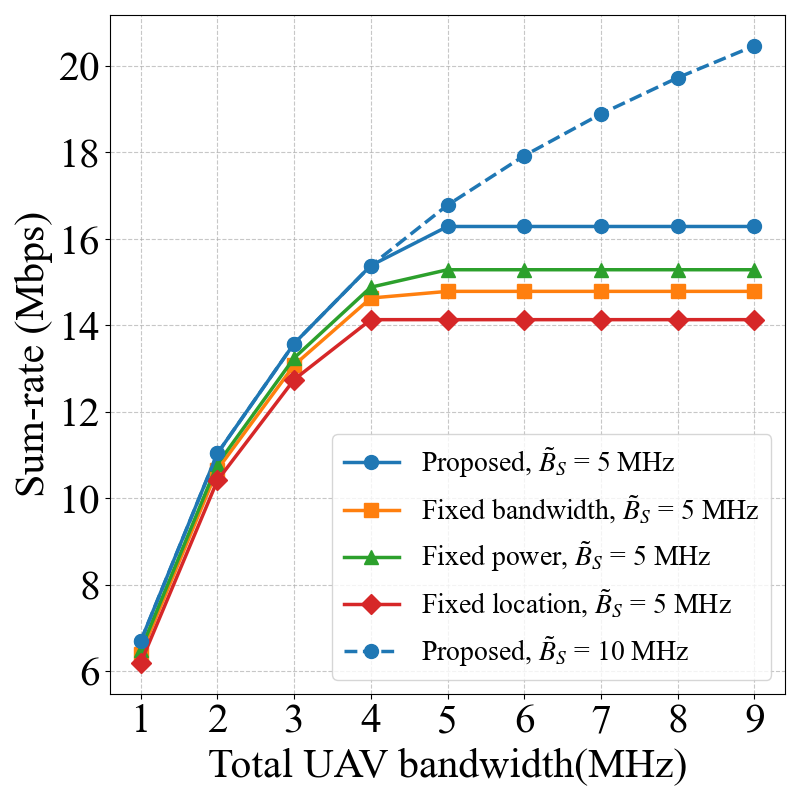} 
    \caption{Sum-rate versus $\tilde{B_R}$ with different algorithms and $\tilde{B_S}$.} 
    \label{fig:s1game1}
  \end{minipage}
  \hfill
  \begin{minipage}{0.3\linewidth}
    \centering
    \includegraphics[width=\textwidth]{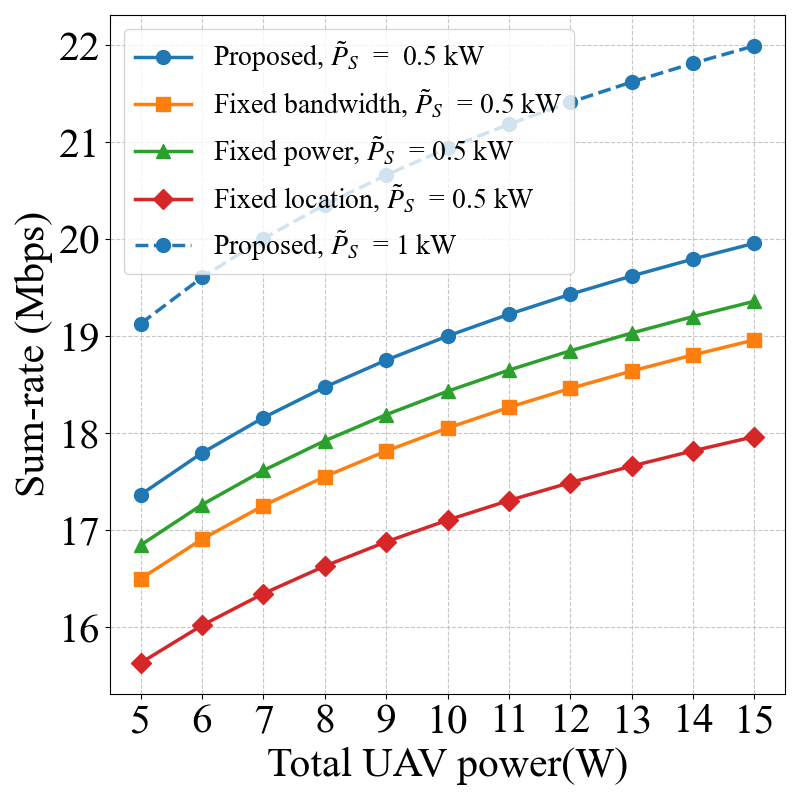} 
    \caption{Sum-rate versus $\tilde{P_R}$ with different algorithms and $\tilde{P_S}$.}
    \label{fig:s1game2}
  \end{minipage}
  \hfill
  \begin{minipage}{0.3\linewidth}
    \centering
    \includegraphics[width=\textwidth]{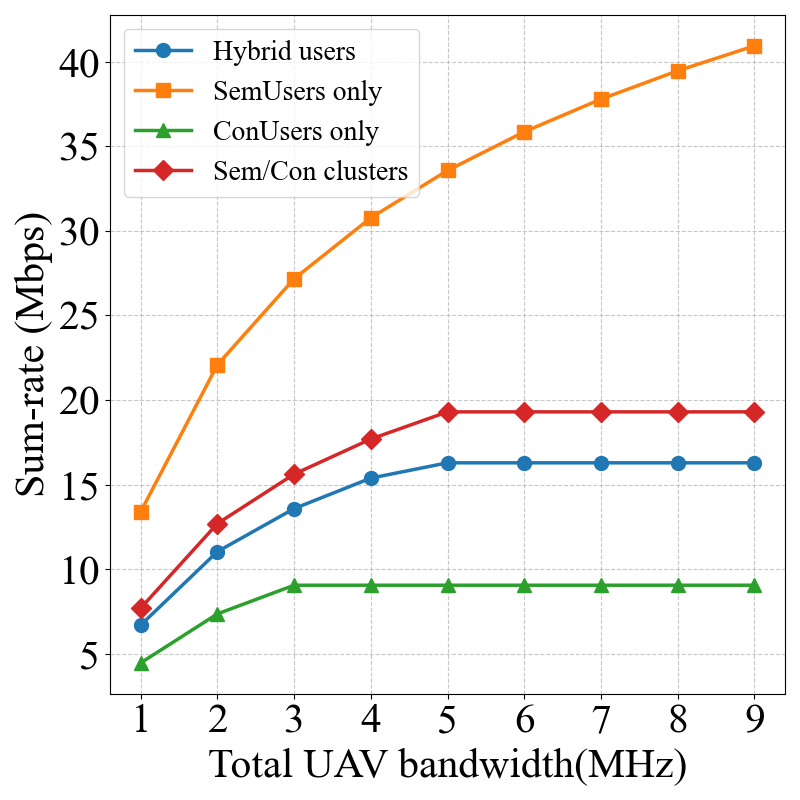} 
    \caption{Sum-rate versus $\tilde{B_R}$ in different user scenarios.}
    \label{fig:s2game1}
  \end{minipage}
\end{figure*}

\begin{figure*}
  \centering
  \begin{minipage}{0.3\linewidth}
    \centering
    \includegraphics[width=\textwidth]{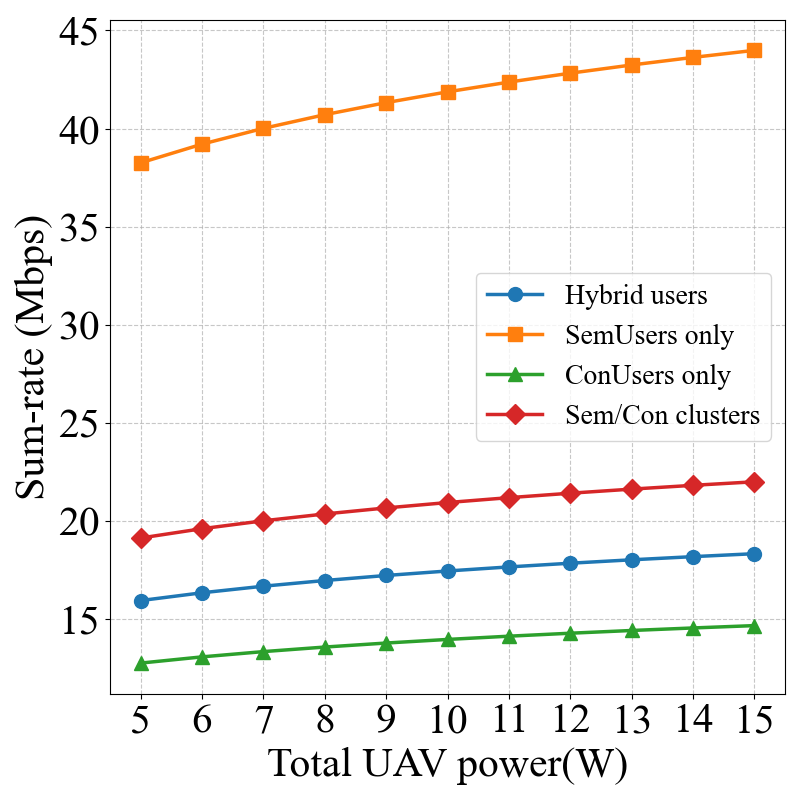} 
    \caption{Sum-rate versus $\tilde{P_R}$ in different user scenarios.} 
    \label{fig:s2game2}
  \end{minipage}
  \hfill
  \begin{minipage}{0.3\linewidth}
    \centering
    \includegraphics[width=\textwidth]{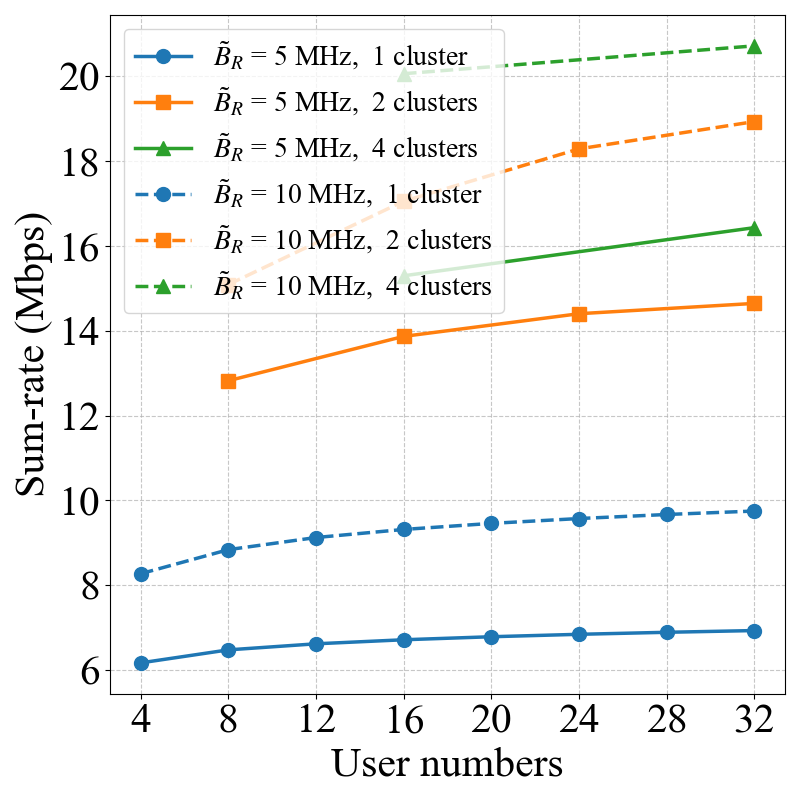} 
    \caption{Sum-rate versus the number of users with different number of clusters and $\tilde{B_R}$.}
    \label{fig:s3game1}
  \end{minipage}
  \hfill
  \begin{minipage}{0.3\linewidth}
    \centering
    \includegraphics[width=\textwidth]{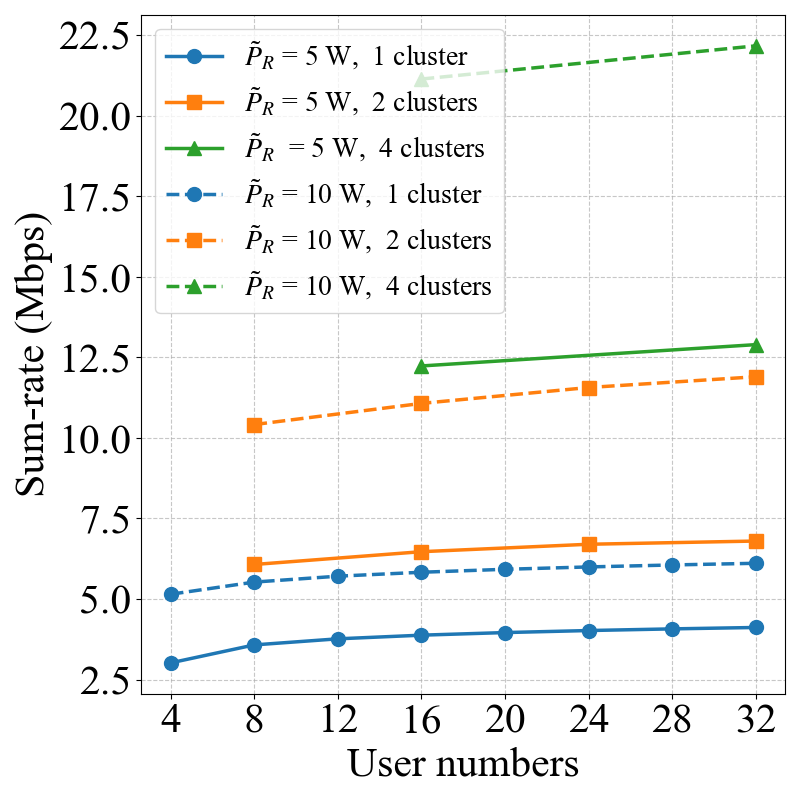} 
    \caption{Sum-rate versus the number of users with different number of clusters and $\tilde{P_R}.$}
    \label{fig:s3game2}
  \end{minipage}
\end{figure*}

In this section, the performance of the proposed system as well as the effectiveness of the proposed algorithm is evaluated. The orbit height of the considered satellite is 60 km and the height of UAVs is 1 km. The noise power spectral density $N_0$ is set to $1 \times 10^{-20}$ W/Hz, corresponding to an extremely low noise floor \cite{yousem}. For the coding of one semantic block, the parameters $\mu_1$, $Q$ and $\mu_2$ are 48, 4, and 4 respectively \cite{textsemcomallocation},\cite{yousem}. For the corresponding parameters of the semantic similarity function, they can be numerically obtained as $a_1 = 0.3980$, $a_2 = 0.5385$, $c_1 = 0.2815$ and $c_2 = -1.3135$ \cite{a1a2c1c2}. As semantic relay involves less computation load than conventional relay, the average number of floating point operations required for semantic and conventional transmission $G_{sem}$ and $G_{con}$ is 2 and 4 respectively, and the number of floating-point operations per computation cycle is 2, the computation power coefficient $\zeta_0$ is $10^{-3}$ \cite{qinload}. For the channel gain in the air-to-ground channel, $\beta_0$ is -60 dB, $\alpha$ is 2 \cite{uavchannel}. Following the similar parameter settings in \cite{satchannel}, the order of magnitude of the channel coefficient in a space-to-air link is -80 dB. In the subsequent simulations, the system parameters are set as follows unless otherwise specified: The satellite bandwidth and power are fixed at $\tilde{B_S} = 10$ MHz and $\tilde{P_S} = 1$ kW, respectively, while the UAV bandwidth and power are set to $\tilde{B_R} = 10$ MHz and $\tilde{P_R} = 10$ W, respectively. In each cluster, users are uniformly distributed within a 1-kilometer square area.

\subsection{Effectiveness of the Proposed Algorithm}

\begin{figure*}[t]
  \centering
  \begin{minipage}{0.3\linewidth}
    \centering
    \includegraphics[width=\textwidth]{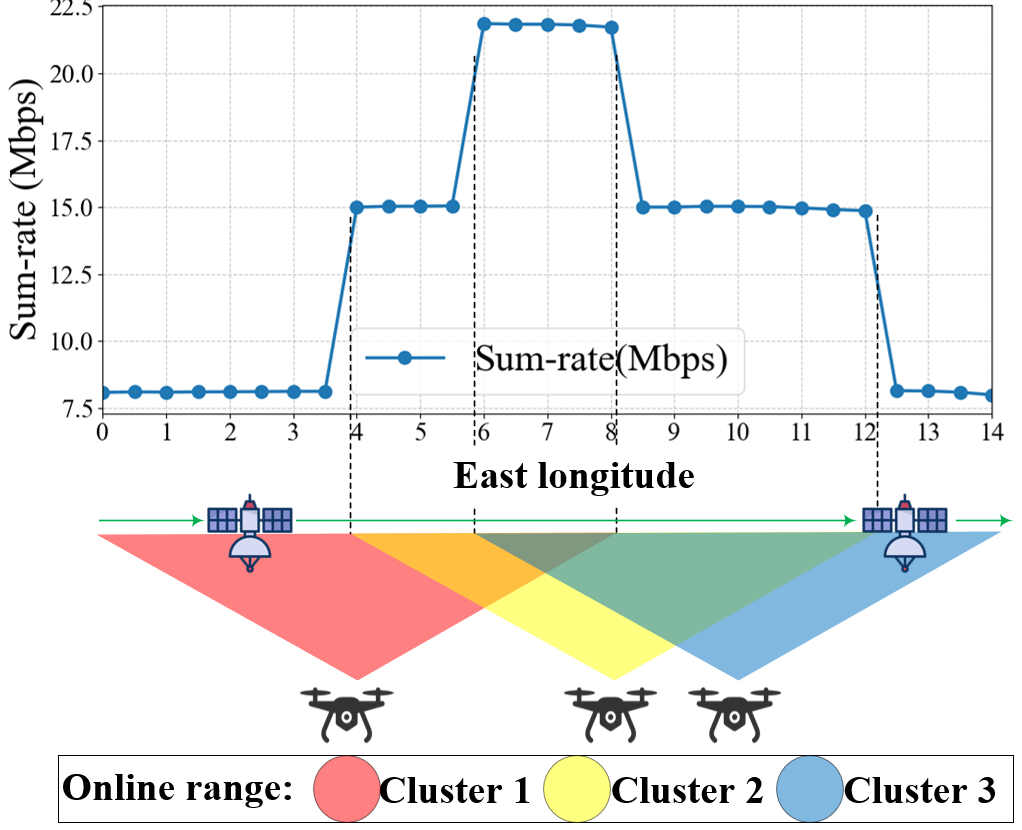} 
    \caption{Long-term system performance during satellite movement} 
    \label{fig:s4}
  \end{minipage}
  \hfill
  \begin{minipage}{0.3\linewidth}
    \centering
    \includegraphics[width=\textwidth]{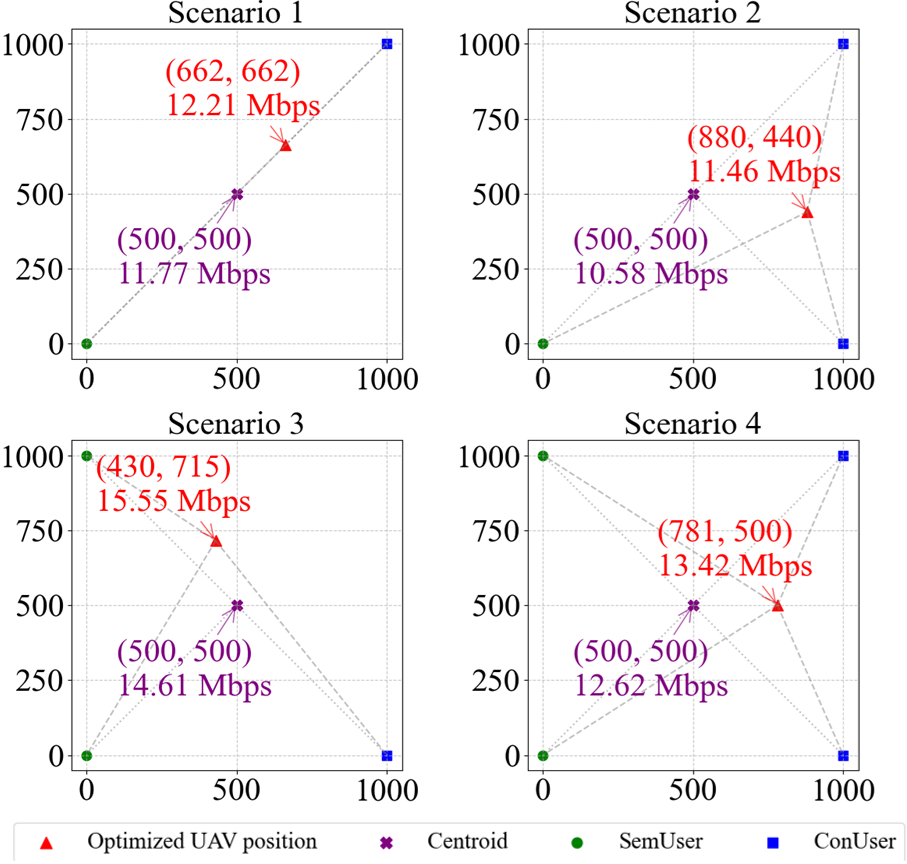} 
    \caption{Optimized UAV position versus different user distribution in regular scenarios.}
    \label{fig:s5}
  \end{minipage}
  \hfill
  \begin{minipage}{0.3\linewidth}
    \centering
    \includegraphics[width=\textwidth]{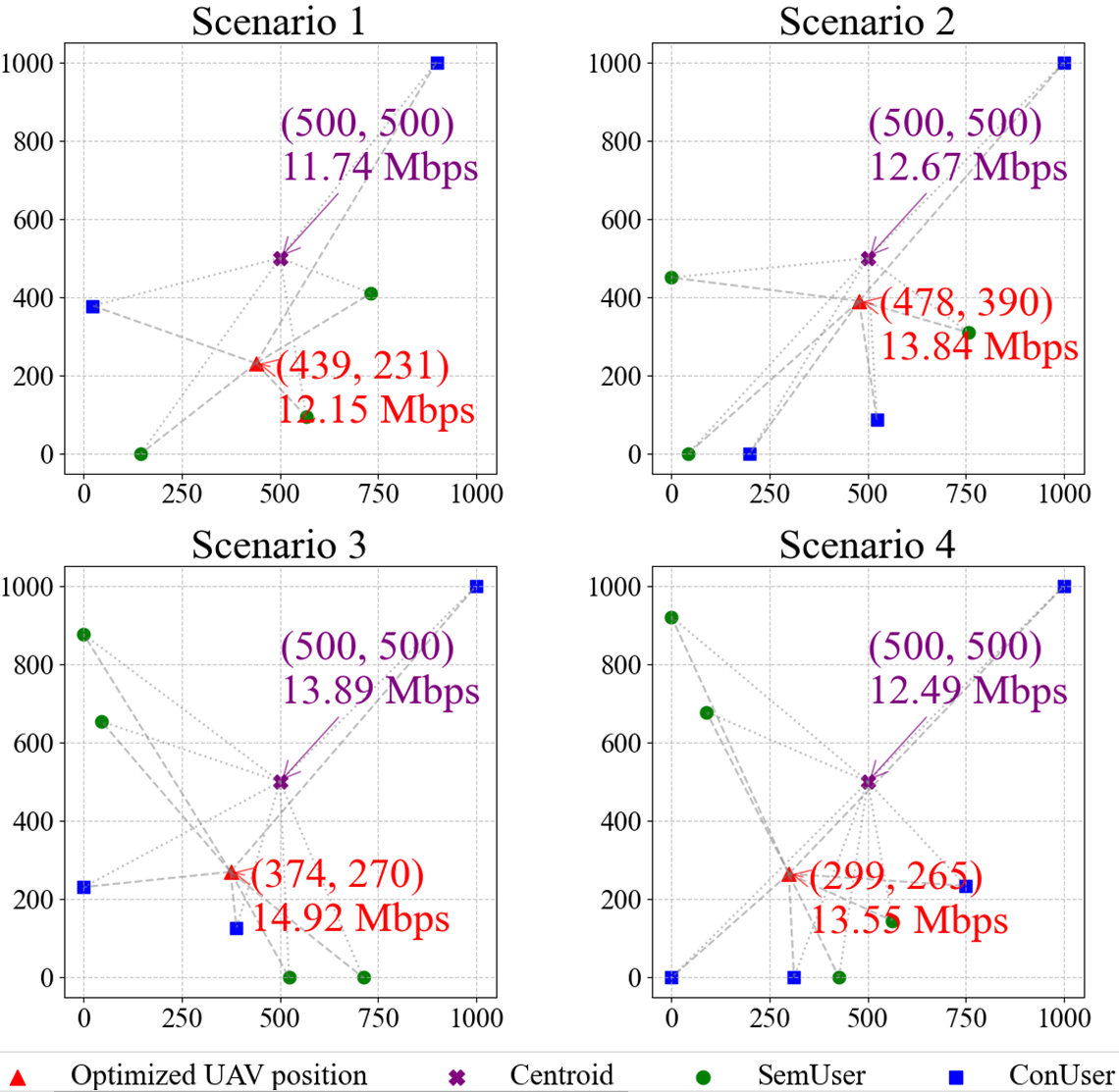} 
    \caption{Optimized UAV position versus different user distribution in random scenarios}
    \label{fig:s5new}
  \end{minipage}
\end{figure*}

In this experiment, a system comprising 5 clusters is considered, each consists of 4 SemUsers and 4 ConUsers. The evaluated algorithms are listed the following:

\begin{enumerate}
    \item \textit{Proposed}: Jointly optimizes power allocation, bandwidth allocation, and UAV positions with proposed algorithm.
    \item \textit{Fixed bandwidth}: Optimizes power allocation and UAV positions under equal bandwidth allocation.
    \item \textit{Fixed power}: Optimizes bandwidth allocation and UAV positions under equal power allocation.
    \item \textit{Fixed location}: Optimizes power and bandwidth allocation with fixed UAV positions.
\end{enumerate}

The results are shown in Fig.~\ref{fig:s1game1} and Fig.~\ref{fig:s1game2}. In Fig.~\ref{fig:s1game1}, the sum-rate is seen to grow quickly with the increase of the total UAV bandwidth $\tilde{B}_R$ at small values and to plateau thereafter. This behaviour is attributed to the competition between the UAV's satellite-facing and user-facing links: when the UAV bandwidth bound is relaxed, contention at the UAV is reduced and end-to-end throughput is improved. However, once $\tilde{B}_R$ exceeds a certain point, further increases no longer translate into system-level gain because the satellite bandwidth $\tilde{B}_S$ becomes the limiting resource. In other words, the UAV-layer bandwidth can only improve performance up to the point at which the satellite-layer bandwidth forms a bottleneck; additional benefit would be realized only if $\tilde{B}_S$ is increased. Across the tested settings, the proposed algorithm consistently attains higher sum-rate than the baseline methods owing to its joint allocation of bandwidth and other resources. 

In Fig.~\ref{fig:s1game2}, the sum-rate increases with the total UAV transmit power $\tilde{P}_R$, while exhibiting diminishing returns at higher power levels. Again, when the satellite power $\tilde{P}_S$ is increased, the performance on the same $\tilde{P}_R$ is improved. These findings collectively indicate that system performance is jointly constrained by resources at both layers. Hence, meaningful throughput enhancement requires coordinated increases of UAV- and satellite-level budgets or an optimization that explicitly addresses the inter-layer competition.

\subsection{Effect of User Scenarios}

This experiment investigates the impact of different user distributions on system performance. The system comprises 4 clusters. The proposed algorithm is applied for resource allocation under the following user arrangements:

\begin{enumerate}
    \item \textit{Hybrid users}: 2 clusters contain 2 SemUsers and 6 ConUsers each, the other 2 clusters contain 6 SemUsers and 2 ConUsers each.
    \item \textit{SemUsers only}: Each cluster consists of 8 SemUsers.
    \item \textit{ConUsers only}: Each cluster consists of 8 ConUsers.
    \item \textit{Sem/Con clusters}: 2 clusters contain 8 ConUsers each, the other 2 clusters contain 8 SemUsers each.
\end{enumerate}

Fig.~\ref{fig:s2game1} and Fig.~\ref{fig:s2game2} illustrate the achieved sum-rate under varying UAV bandwidth and power, respectively. Several important conclusions can be drawn:

First, the \textit{SemUsers only} scenario significantly outperforms all other configurations, achieving the highest sum-rate in both bandwidth and power variation tests. The \textit{ConUsers only} scenario exhibits the lowest sum-rate, and the performance gap between semantic and conventional users emphasizes the critical advantage of semantic technologies in future wireless networks. Notably, the \textit{Heterogeneous users} and \textit{Sem/Con clusters} scenarios demonstrate intermediate performance. However, the \textit{Sem/Con clusters} arrangement slightly outperforms the heterogeneous case under equal user composition. This suggests that grouping users of the same type within clusters improves resource allocation efficiency, likely due to reduced interference and more homogeneous channel conditions within each cluster. The results strongly advocate for user-type-aware clustering strategies in multi-user SemCom systems.

\subsection{Effect of Number of Users and Clusters}

This experiment investigates the impact of user population and cluster configuration on system performance. The proposed algorithm is employed to evaluate the following configurations:

\begin{enumerate}
    \item \textit{Varying User Numbers}: System performance under different user populations (4 to 32 users)
    \item \textit{Varying Cluster Numbers}: System performance under different cluster configurations (1, 2, and 4 clusters)
\end{enumerate}

The results demonstrate that increasing system resources significantly enhances sum-rate performance, particularly through additional UAV relays, while user population growth alone yields limited gains. This underscores the critical role of UAV deployment in overcoming resource constraints. As shown in Figs. \ref{fig:s3game1} and \ref{fig:s3game2}, elevating UAV bandwidth or transmission power consistently improves sum-rate across all scenarios. Furthermore, systems with more UAV clusters achieve superior performance by distributing users among parallel relays, reducing per-user resource competition and improving spatial reuse. These findings validate the essential advantage of UAV-aided relay architectures over direct satellite-to-ground links.

\subsection{Long-term System Performance During Satellite Movement}

This experiment evaluates the system performance during extended satellite operation as the satellite moves from $0 ^{\circ}$ E to $14^{\circ}$ E longitude. Fig.~\ref{fig:s4} presents the achieved sum-rate across the satellite's trajectory.

From Fig.~\ref{fig:s4}, the sum-rate demonstrates significant variation throughout the satellite's flight path. Notably, the system achieves peak performance when the satellite is positioned in the overlapping coverage areas of all three clusters, indicating that spatial diversity and coordinated resource allocation across multiple clusters substantially enhance overall system capacity. These results demonstrate the algorithm's effectiveness in leveraging spatial diversity through intelligent resource allocation across multiple clusters. The progressive performance improvement with increasing cluster accessibility underscores the algorithm's capability to dynamically optimize resource distribution based on real-time cluster availability.

\subsection{UAV Deployment Optimization Analysis}

Fig.~\ref{fig:s5} and Fig.~\ref{fig:s5new} present the optimized UAV positions under different user distributions. The results demonstrate that optimized positioning significantly enhances system performance compared to geometric centroid placement. The UAVs are strategically placed closer to conventional users to compensate for their poorer channel conditions, thereby improving overall rate performance. Furthermore, the proposed algorithm adapts effectively to varying user densities and spatial configurations, confirming its robustness and practical applicability in real-world UAV-assisted communication systems.

\section{Conclusion}
This paper addresses the critical challenge of resource allocation in multi-cluster SAGINs serving heterogeneous users with diverse computational capabilities. We have proposed a novel architecture where UAV relays employ dual-mode transmission to serve both semantic and conventional users. To maximize the sum-rate, we have formulated a joint optimization problem involving power allocation, bandwidth distribution, and UAV deployment. We have solved the resulting non-convex problem through an efficient AO-based algorithm that decouples the problem into tractable convex subproblems via auxiliary variables. Numerical results have demonstrated the superiority of the proposed scheme over various baselines, highlighting how integrated semantic communication and intelligent resource orchestration can enhance next-generation wireless networks.

\bibliography{{reference}}

\begin{thebibliography}{10}
\providecommand{\url}[1]{#1}
\csname url@samestyle\endcsname
\providecommand{\newblock}{\relax}
\providecommand{\bibinfo}[2]{#2}
\providecommand{\BIBentrySTDinterwordspacing}{\spaceskip=0pt\relax}
\providecommand{\BIBentryALTinterwordstretchfactor}{4}
\providecommand{\BIBentryALTinterwordspacing}{\spaceskip=\fontdimen2\font plus
\BIBentryALTinterwordstretchfactor\fontdimen3\font minus \fontdimen4\font\relax}
\providecommand{\BIBforeignlanguage}[2]{{%
\expandafter\ifx\csname l@#1\endcsname\relax
\typeout{** WARNING: IEEEtran.bst: No hyphenation pattern has been}%
\typeout{** loaded for the language `#1'. Using the pattern for}%
\typeout{** the default language instead.}%
\else
\language=\csname l@#1\endcsname
\fi
#2}}
\providecommand{\BIBdecl}{\relax}
\BIBdecl

\bibitem{SAGINmag}
Y.~Zhang, X.~Wang, Y.~Gang, J.~Wang, S.~Wu, P.~Zhang, and Y.~Shi, ``6g sagin information transmission model,'' \emph{IEEE Commun. Mag.}, vol.~63, no.~6, pp. 98--105, 2025.

\bibitem{saginsurvey}
H.~Cui, J.~Zhang, Y.~Geng, Z.~Xiao, T.~Sun, N.~Zhang, J.~Liu, Q.~Wu, and X.~Cao, ``Space-air-ground integrated network (sagin) for 6g: Requirements, architecture and challenges,'' \emph{China Commun.}, vol.~19, no.~2, pp. 90--108, 2022.

\bibitem{satchannelestimation}
M.~Arti, ``Channel estimation and detection in satellite communication systems,'' \emph{IEEE Trans. Veh. Technol.}, vol.~65, no.~12, pp. 10\,173--10\,179, 2016.

\bibitem{satchannelestimation2}
K.-X. Li, X.~Gao, and X.-G. Xia, ``Channel estimation for leo satellite massive mimo ofdm communications,'' \emph{IEEE Trans. Wireless Commun.}, vol.~22, no.~11, pp. 7537--7550, 2023.

\bibitem{yinsem}
Y.~Yin, S.~Liu, D.~Wen, Y.~Wu, and Y.~Shi, ``Joint source and channel coding for multi-modal satellite-to-ground semantic communications,'' in \emph{2025 IEEE Wirel. Commun. Netw. Conf. (WCNC)}, 2025, pp. 01--06.

\bibitem{urbanuav1}
T.~V. Nguyen, H.~D. Le, and A.~T. Pham, ``On the design of ris–uav relay-assisted hybrid fso/rf satellite–aerial–ground integrated network,'' \emph{IEEE Trans. Aerosp. Electron. Syst.}, vol.~59, no.~2, pp. 757--771, 2023.

\bibitem{yousem}
Z.~Hu, C.~You, T.~Liu, D.~Wen, Y.~Hu, Y.~Cui, Y.~Gong, and K.~Huang, ``Semantic communication meets edge intelligence: Semantic-relay-aided text transmissions,'' \emph{IEEE Internet Things J.}, vol.~11, no.~24, pp. 39\,409--39\,423, 2024.

\bibitem{zhouSAGIN}
N.~Cheng, W.~Quan, W.~Shi, H.~Wu, Q.~Ye, H.~Zhou, W.~Zhuang, X.~Shen, and B.~Bai, ``A comprehensive simulation platform for space-air-ground integrated network,'' \emph{IEEE Wireless Commun.}, vol.~27, no.~1, pp. 178--185, 2020.

\bibitem{wen6g}
D.~Wen, Y.~Zhou, X.~Li, Y.~Shi, K.~Huang, and K.~B. Letaief, ``A survey on integrated sensing, communication, and computation,'' \emph{IEEE Commun. Surv. Tutor.}, vol.~27, no.~5, pp. 3058--3098, 2025.

\bibitem{saginOnSea}
X.~Li and W.~Shi, ``Hybrid satellite-uav-terrestrial maritime networks: Network selection for users on a vessel optimized with transmit power and uav position,'' \emph{China Commun.}, vol.~19, no.~9, pp. 37--46, 2022.

\bibitem{ship2sat}
X.~Hu, B.~Lin, X.~Lu, P.~Wang, N.~Cheng, Z.~Yin, and W.~Zhuang, ``Performance analysis of end-to-end leo satellite-aided shore-to-ship communications: A stochastic geometry approach,'' \emph{IEEE Trans. Wireless Commun.}, vol.~23, no.~9, pp. 11\,753--11\,769, 2024.

\bibitem{UAVsagin}
J.~Tan, F.~Tang, M.~Zhao, and N.~Kato, ``Outage probability, performance, and fairness analysis of space–air–ground integrated network (sagin): Uav altitude and position angle,'' \emph{IEEE Trans. Wireless Commun.}, vol.~24, no.~2, pp. 940--954, 2025.

\bibitem{zhouris}
X.~Zhang, X.~Qin, Z.~Zhang, L.~X. Cai, H.~Zhou, and W.~Zhuang, ``Ris-aided mimo downlink transmission for ultradense leo satellite-terrestrial networks,'' \emph{IEEE Internet Things J.}, vol.~12, no.~11, pp. 15\,304--15\,318, 2025.

\bibitem{association}
Q.~Chen, C.~Wu, S.~Han, W.~Meng, and T.~Q.~S. Quek, ``Intersatellite-link-enhanced transmission scheme toward aviation iot in sagin,'' \emph{IEEE Internet Things J.}, vol.~12, no.~9, pp. 11\,812--11\,826, 2025.

\bibitem{ships}
Y.~Yao, W.~Xiao, P.~Miao, G.~Chen, H.~Yang, C.-B. Chae, and K.-K. Wong, ``Uav-relay-aided secure maritime networks coexisting with satellite networks: Robust beamforming and trajectory optimization,'' \emph{IEEE Trans. Wireless Commun.}, pp. 1--1, 2025.

\bibitem{reinforcement}
Y.~Yang, B.~Li, Y.~Liu, J.~Dai, J.~He, T.~Wang, and X.~He, ``Joint delay and throughput network resource orchestration in sagin based on sdn,'' in \emph{2023 9th Int. Conf. Comput. Commun. (ICCC)}, 2023, pp. 232--236.

\bibitem{gunduzsemcom}
D.~G{\"u}nd{\"u}z, Z.~Qin, I.~E. Aguerri, H.~S. Dhillon, Z.~Yang, A.~Yener, K.~K. Wong, and C.-B. Chae, ``Beyond transmitting bits: Context, semantics, and task-oriented communications,'' \emph{IEEE J. Sel. Areas Commun.}, vol.~41, no.~1, pp. 5--41, 2022.

\bibitem{csgsemcom}
Y.~Sun, H.~Chen, X.~Xu, P.~Zhang, and S.~Cui, ``Semantic knowledge base-enabled zero-shot multi-level feature transmission optimization,'' \emph{IEEE Trans. Wireless Commun.}, vol.~23, no.~5, pp. 4904--4917, 2024.

\bibitem{qintext}
L.~Yan, Z.~Qin, R.~Zhang, Y.~Li, and G.~Y. Li, ``Resource allocation for text semantic communications,'' \emph{IEEE Wireless Commun. Lett.}, vol.~11, no.~7, pp. 1394--1398, 2022.

\bibitem{DuSemComImage}
J.~Kang, H.~Du, Z.~Li, Z.~Xiong, S.~Ma, D.~Niyato, and Y.~Li, ``Personalized saliency in task-oriented semantic communications: Image transmission and performance analysis,'' \emph{IEEE J. Sel. Areas Commun.}, vol.~41, no.~1, pp. 186--201, 2023.

\bibitem{videoSemCom}
P.~Jiang, C.-K. Wen, S.~Jin, and G.~Y. Li, ``Wireless semantic communications for video conferencing,'' \emph{IEEE J. Sel. Areas Commun.}, vol.~41, no.~1, pp. 230--244, 2022.

\bibitem{relaysemcom4}
X.~Luo, B.~Yin, Z.~Chen, B.~Xia, and J.~Wang, ``Autoencoder-based semantic communication systems with relay channels,'' in \emph{2022 IEEE Int. Conf. Commun. Workshops (ICC Workshops)}, 2022, pp. 711--716.

\bibitem{relaysemcom5}
Y.~Yin, L.~Huang, Q.~Li, B.~Tang, A.~Pandharipande, and X.~Ge, ``Semantic communication on multi-hop concatenated relay networks,'' in \emph{2023 IEEE/CIC Int. Conf. Commun. China (ICCC)}, 2023, pp. 1--6.

\bibitem{relaysemcom1}
P.~Yue, X.~Zhang, Q.~Cui, H.~Song, and X.~Tao, ``Two-hop semantic communication for non-terrestrial network,'' in \emph{GLOBECOM 2024 - 2024 IEEE Glob. Commun. Conf.}, 2024, pp. 2569--2574.

\bibitem{youiccresource}
J.~Zhao, M.~Chen, Z.~Yang, C.~You, and M.~Chen, ``Resource allocation for semantic relay aided wireless networks with probability graph,'' in \emph{ICC 2024 - IEEE Int. Conf. Commun.}\hskip 1em plus 0.5em minus 0.4em\relax IEEE, 2024, pp. 5317--5322.

\bibitem{qinresource}
L.~Wang, W.~Wu, F.~Zhou, Z.~Yang, Z.~Qin, and Q.~Wu, ``Adaptive resource allocation for semantic communication networks,'' \emph{IEEE Trans. Commun.}, vol.~72, no.~11, pp. 6900--6916, 2024.

\bibitem{satchannel}
Z.~Zhao, Z.~Yang, M.~Chen, C.~Zhu, W.~Xu, Z.~Zhang, and K.~Huang, ``Energy-efficient probabilistic semantic communication over space-air-ground integrated networks,'' \emph{IEEE Trans. Wireless Commun.}, pp. 8814--8829, 2025.

\bibitem{combined1}
R.~Zhang, H.~Du, D.~Niyato, J.~Kang, Z.~Xiong, A.~Jamalipour, P.~Zhang, and D.~I. Kim, ``Generative ai for space-air-ground integrated networks,'' \emph{IEEE Wireless Commun.}, vol.~31, no.~6, pp. 10--20, 2024.

\bibitem{combined2}
B.~Guo, Z.~Xiong, B.~Wang, T.~Q.~S. Quek, and Z.~Han, ``Semantic communication-aware end-to-end routing in large-scale leo satellite networks,'' in \emph{2024 IEEE Int. Conf. Metaverse Comput., Netw., Appl. (MetaCom)}, 2024, pp. 137--142.

\bibitem{combined3}
H.~Gao, R.~Cao, W.~Xu, C.~Yuan, and H.-H. Chen, ``Space-air-ground integrated networks with task-driven connected intelligence,'' \emph{IEEE Wireless Commun.}, vol.~32, no.~2, pp. 254--261, 2025.

\bibitem{kbhsagin}
Z.~Zhao, Z.~Yang, M.~Chen, C.~Zhu, W.~Xu, Z.~Zhang, and K.~Huang, ``Energy-efficient probabilistic semantic communication over space-air-ground integrated networks,'' \emph{IEEE Trans. Wireless Commun.}, vol.~24, no.~10, pp. 8814--8829, 2025.

\bibitem{relaySemcomResource1}
H.~Hu, X.~Zhu, F.~Zhou, W.~Wu, R.~Qingyang~Hu, and H.~Zhu, ``Resource allocation for multi-modal semantic communication in uav collaborative networks,'' \emph{IEEE Trans. Commun.}, vol.~73, no.~9, pp. 7599--7616, 2025.

\bibitem{scalar_equiv1}
A.~Mezghani and R.~W. Heath, ``Massive mimo precoding and spectral shaping with low resolution phase-only dacs and active constellation extension,'' \emph{IEEE Trans. Wireless Commun.}, vol.~21, no.~7, pp. 5265--5278, 2022.

\bibitem{scalar_equiv2}
L.~Dong, S.~Loyka, and Y.~Li, ``Algorithms for globally-optimal secure signaling over gaussian mimo wiretap channels under interference constraints,'' \emph{IEEE Trans. Signal Process.}, vol.~68, pp. 4513--4528, 2020.

\bibitem{imgsimilarity}
W.~Li, H.~Liang, C.~Dong, X.~Xu, P.~Zhang, and K.~Liu, ``Non-orthogonal multiple access enhanced multi-user semantic communication,'' \emph{IEEE Trans. Cogn. Commun. Netw.}, vol.~9, no.~6, pp. 1438--1453, 2023.

\bibitem{uavchannelnew1}
F.~Cheng, G.~Gui, N.~Zhao, Y.~Chen, J.~Tang, and H.~Sari, ``Uav-relaying-assisted secure transmission with caching,'' \emph{IEEE Trans. Commun.}, vol.~67, no.~5, pp. 3140--3153, 2019.

\bibitem{uavchannelnew2}
J.~Ji, K.~Zhu, D.~Niyato, and R.~Wang, ``Joint cache placement, flight trajectory, and transmission power optimization for multi-uav assisted wireless networks,'' \emph{IEEE Trans. Wireless Commun.}, vol.~19, no.~8, pp. 5389--5403, 2020.

\bibitem{qinload}
X.~Chen, J.~Xu, W.~Ni, S.~Hu, Z.~Qin, and S.~Zhang, ``Energy-efficient resource allocation for multi-user semantic communications: A deep reinforcement learning approach,'' \emph{IEEE Wireless Commun. Lett.}, vol.~14, no.~5, pp. 1541--1545, 2025.

\bibitem{zhangruiuav}
Y.~Zeng and R.~Zhang, ``Energy-efficient uav communication with trajectory optimization,'' \emph{IEEE Trans. Wireless Commun.}, vol.~16, no.~6, pp. 3747--3760, 2017.

\bibitem{textsemcomallocation}
L.~Yan, Z.~Qin, R.~Zhang, Y.~Li, and G.~Y. Li, ``Resource allocation for text semantic communications,'' \emph{IEEE Wireless Commun. Lett.}, vol.~11, no.~7, pp. 1394--1398, 2022.

\bibitem{a1a2c1c2}
X.~Mu, Y.~Liu, L.~Guo, and N.~Al-Dhahir, ``Heterogeneous semantic and bit communications: A semi-noma scheme,'' \emph{IEEE J. Sel. Areas Commun.}, vol.~41, no.~1, pp. 155--169, 2023.

\bibitem{uavchannel}
K.~Liu and J.~Zheng, ``Uav trajectory optimization for time-constrained data collection in uav-enabled environmental monitoring systems,'' \emph{IEEE Internet Things J.}, vol.~9, no.~23, pp. 24\,300--24\,314, 2022.

\end{thebibliography}

\section*{Appendix A}
The proof of Lemma~\ref{subsol1} is as follows. As (P3) is convex and satisfies the Slater's condition, the strong duality holds between the primal problem and its dual problem. The KKT conditions are thus applied to solve (P3), where the Lagrangian function is
\begin{equation}
        \begin{aligned}
        \mathcal{L}_1 &= -\mathcal{F}_{obj} + \lambda_2 (\sum_{n=1}^N B^{(n)}_{S2R} - \tilde{B_S})\\
        & + \sum_{n=1}^N\lambda_3^{(n)} (B^{(n)}_{S2R} + \sum_{i=1}^{I^{(n)}} B^{(n)}_{R2SU, i}+ \sum_{j=1}^{J^{(n)}} B^{(n)}_{R2CU, j} - \tilde{B_R})\\
        & + \sum_{n=1}^N \lambda^{(n)}_{7'}(N_0 10^{(r^{(n)}_{S2R} / 10)} (B^{(n)}_{S2R})^2 - | h^{(n)}_{S2R} |^2 P^{(n)}_{S2R} B^{(n)}_{S2R})\\
        & + \sum_{n=1}^N \sum_{i=1}^{I^{(n)}}( \lambda^{(n)}_{8,SU,i} (  \frac{ \hat H^{(n)}_{R2SU, i} P^{(n)}_{R2SU, i}}{B^{(n)}_{R2SU, i} N_0} - \hat \gamma^{(n)}_{R2SU, i} ) )\\
        & + \sum_{n=1}^N \sum_{j=1}^{J^{(n)}}( \lambda^{(n)}_{8,CU,j} (  \frac{ \hat H^{(n)}_{R2CU, j} P^{(n)}_{R2CU, j}}{B^{(n)}_{R2CU, j} N_0} - \hat \gamma^{(n)}_{R2CU, j} ) )\\
        & + \sum_{n=1}^N \lambda^{(n)}_{10} (\frac{G_{sem}}{z} \hat \Gamma^{(n)}_{sum,SU} + \frac{G_{con}}{z} \hat \Gamma^{(n)}_{sum,CU} - \hat \nu^{(n)}) \\
        & + \sum_{n=1}^N \lambda^{(n)}_{11} (\hat \Gamma^{(n)}_{sum,SU}  + \hat \Gamma^{(n)}_{sum,CU} -\mu_1 \frac{B^{(n)}_{S2R} }{Q} \varepsilon_Q(\hat r^{(n)}_{S2R}))
    \end{aligned}
\end{equation}
where $\{\lambda^{(n)}_{m} \}$ are the dual variables associated with the constraints $\mathcal{C}_{m}$ respectively.

Based on the KKT conditions with given $\{\lambda^{(n)}_{m} \}$, it follows that
\begin{equation}
\label{LBS2R}
    \begin{aligned}
        \frac{\partial \mathcal{L}_{1}}{\partial B_{S 2 R}^{(n)}} =& \lambda_{7^{\prime}}^{(n)}\left(2 N_{0} 10^{\frac{\gamma_{S 2 R}^{(n)}}{10}} B_{S 2 R}^{(n)}-\left|h_{S 2 R}^{(n)}\right|^{2} P_{S 2 R}^{(n)}\right)\\
        & + \lambda_{2}+\lambda_{3}^{(n)}
        -\lambda_{11}^{(n)} \frac{\mu_{1}}{Q} \varepsilon_{Q}\left(\hat{\gamma}_{S 2 R}^{(n)}\right)=0 ,\forall n \in \mathcal{N},
    \end{aligned}
\end{equation}
\begin{equation}
\label{LBR2SU}
    \begin{aligned}
        \frac{\partial \mathcal{L}_1}{\partial B_{R2SU,i}^{(n)} } =& -\frac{\partial \mathcal{F}_{obj}}{\partial B_{R2SU,i}^{(n)} }  -\lambda_{8,SU,i}^{(n)}\frac{\hat{H}_{R2SU,i}^{(n)}P_{R2SU,i}^{(n)}}{N_{0}\left(B_{R2SU,i}^{(n)}\right)^{2}} \\
        &+A_{B,SU,i}^{(n)} + \lambda_{3}^{(n)} =0, \forall n \in \mathcal{N},
    \end{aligned}
\end{equation}
\begin{equation}
\label{LBR2CU}
    \begin{aligned}
        \frac{\partial \mathcal{L}_1}{\partial B_{R2CU,j}^{(n)} } =& -\frac{\partial \mathcal{F}_{obj}}{\partial B_{R2CU,j}^{(n)} }  -\lambda_{8,CU,j}^{(n)}\frac{\hat{H}_{R2CU,j}^{(n)}P_{R2CU,j}^{(n)}}{N_{0}\left(B_{R2CU,j}^{(n)}\right)^{2}}\\
        &+A_{B,CU,j}^{(n)} + \lambda_{3}^{(n)} =0, \forall n \in \mathcal{N},
    \end{aligned}
\end{equation}
where $A_{B,SU,i}^{(n)}=\lambda_{10}^{(n)}\frac{G_{sem}}{z}\frac{\mu_{1}}{\mu_{2}Q}\hat{\eta}_{R2SU,i}^{(n)}+\lambda_{11}^{(n)}\frac{\mu_{1}}{\mu_{2}Q}\hat{\eta}_{R2SU,i}^{(n)}$, $A_{B,CU,j}^{(n)}=\lambda_{10}^{(n)}\frac{G_{con}}{z}\hat{\eta}_{R2CU,j}^{(n)}+\lambda_{11}^{(n)}\hat{\eta}_{R2CU,j}^{(n)}$, and (\ref{LBS2R}), (\ref{LBR2SU}) and (\ref{LBR2CU}) are the first-order derivatives of $\mathcal{L}_1$ w.r.t. $B_{S 2 R}^{(n)}$, $B_{R2SU,i}^{(n)}$ and $B_{R2CU,j}^{(n)}$ respectively. By combining with some manipulations, we have
\begin{equation}
    B_{S 2 R}^{(n)}=\frac{\Lambda^{(n)}_{B,S2R}+\lambda_{7^{\prime}}^{(n)}\left|h_{S 2 R}^{(n)}\right|^{2} P_{S 2 R}^{(n)}+\lambda_{11}^{(n)} \frac{\mu_{1}}{Q} \varepsilon_{Q}\left(\hat{\gamma}_{S 2 R}^{(n)}\right)}{2 \lambda_{7^{\prime}}^{(n)} N_{0} 10^{\frac{\gamma_{S 2 R}^{(n)}}{10}}} ,
\end{equation}
\begin{equation}
    B_{R2SU,i}^{(n)}=\sqrt{\frac{\lambda_{8,SU,i}^{(n)}\hat{H}_{R2SU,i}^{(n)}P_{R2SU,i}^{(n)}}{N_{0}\left[ - \frac{\partial \mathcal{F}_{obj}}{\partial B_{R2SU,i}^{(n)} }+\lambda_{3}^{(n)} + A_{B,SU,i}^{(n)}\right]}},
\end{equation}
\begin{equation}
    B_{R2CU,j}^{(n)}=\sqrt{\frac{\lambda_{8,CU,j}^{(n)}\hat{H}_{R2CU,j}^{(n)}P_{R2CU,j}^{(n)}}{N_0\left[-\frac{\partial \mathcal{F}_{obj}}{\partial B_{R2CU,j}^{(n)} }+\lambda_3^{(n)}+A_{B,CU,j}^{(n)}\right]}},
\end{equation}
where $\Lambda^{(n)}_{B,S2R} = -\left(\lambda_{2}+\lambda_{3}^{(n)}\right) $. By leveraging the primal-dual method, it can be updated both the primal variables and the dual variables to approximately satisfy the optimality conditions. At each iteration $k$, dual variables are updated as
\begin{equation}
    \lambda^{(n)}_{m,(k+1)} = \lambda^{(n)}_{m,(k+1)} + \delta^{(n)}_{m,(k+1)} \frac{\partial \mathcal{L}_1}{\partial \lambda^{(n)}_{m,(k+1)}}
\end{equation}
where $\delta^{(n)}_{m,(k+1)}$ is the step size for $\lambda^{(n)}_{m,(k+1)}$. It gradually increases the value of the dual variable until convergence, and then obtains the primary optimal solution through optimal dual variables. This Lemma 1 is thus proved.

Regarding Lemma~\ref{subsol2} and Lemma~\ref{subsol3}, since the corresponding subproblems share similar characteristics with (P3), their proofs follow a comparable calculation process and update method. Hence, the proofs of Lemma~\ref{subsol2} and Lemma~\ref{subsol3} can be established.

\bibliographystyle{IEEEtran}

\end{document}